\definecolor{red}{rgb}{0.7,0.15,0.15}
\definecolor{green}{rgb}{0,0.5,0}
\definecolor{blue}{rgb}{0,0,0.7}
\makeatletter \@addtoreset{equation}{section}
\newtheorem{theorem}{Theorem}[section]
\newtheorem{lemma}[theorem]{Lemma}
\newtheorem{definition}[theorem]{Definition}
\newtheorem{remark}[theorem]{Remark}
\newtheorem{result}{Result}
\def \E{\mathbb{E}}
\def\d{\mathrm{d}}
\title{Optimal hedging of an informed broker facing many traders}
\author{Philippe Bergault\thanks{CEREMADE, UMR CNRS 7534, Universit\'e Paris Dauphine-PSL,
Place de Lattre de Tassigny, 75775 Paris Cedex 16, France. e-mail: bergault@ceremade.dauphine.fr} \and 
Pierre Cardaliaguet\thanks{CEREMADE, UMR CNRS 7534, Universit\'e Paris Dauphine-PSL,
Place de Lattre de Tassigny, 75775 Paris Cedex 16, France. e-mail: cardaliaguet@ceremade.dauphine.fr}\and 
Wenbin Yan\thanks{École Polytechnique, Rte de Saclay, 91120 Palaiseau, France. e-mail: wenbin.yan@polytechnique.edu}}
\begin{document}

\maketitle

\begin{abstract}
    This paper investigates the optimal hedging strategies of an informed broker interacting with multiple traders in a financial market. We develop a theoretical framework in which the broker, possessing exclusive information about the drift of the asset's price, engages with traders whose trading activities impact the market price. Using a mean-field game approach, we derive the equilibrium strategies for both the broker and the traders, illustrating the intricate dynamics of their interactions. The broker's optimal strategy involves a Stackelberg equilibrium, where the broker leads and the traders follow. Our analysis also addresses the mean field limit of finite-player models and shows the convergence to the mean-field solution as the number of traders becomes large.

\medskip
\noindent{\bf Keywords: market making, algorithmic trading, externalization, mean field games, Stackelberg equilibrium} 
\end{abstract}

\tableofcontents

\section{Introduction}

Liquidity provision is a cornerstone of financial markets, with a significant portion of this activity occurring in over-the-counter (OTC) markets where brokers play a pivotal role in managing order flow. A key challenge for brokers arises when they possess private information about the price process, specifically the drift, and must interact with a large number of uninformed traders. This creates a complex environment where the broker seeks to exploit his informational advantage while managing the risks associated with inventory and market impact. The core of the problem lies in the broker’s ability to externalize risk in the market or internalize it to optimize his position without fully revealing his informational edge to the uninformed traders.\\

The study of externalization-internalization strategies has gained increasing attention in financial markets research. Externalization refers to the act of hedging or offloading a position in the market, while internalization involves holding the risk internally in the hope of favorable price movements or the arrival of offsetting trades. Several recent studies have focused on brokers’ strategies for balancing these two approaches in various markets. For instance, \citeauthor{butz2019internalisation} \cite{butz2019internalisation} develop a model that computes the internalization horizon and cost for foreign exchange (FX) brokers. Other works, such as \citeauthor{barzykin2021algorithmic} \cite{barzykin2021algorithmic,barzykin2022market}, explore how brokers choose to externalize risk only when their inventory exceeds certain thresholds (see also \citeauthor{barzykin2022dealing} \cite{barzykin2022dealing, barzykin2024market}). Additionally, empirical surveys like \citeauthor{schrimpf2019fx} \cite{schrimpf2019fx} highlight the growing use of internalization in FX markets, though externalization remains critical for risk management.\\

A closely related branch of the literature examines how brokers unwind stochastic order flow. \citeauthor{cartea2020trading} \cite{cartea2020trading} and \citeauthor{cartea2022double} \cite{cartea2022double} focus on optimal liquidation strategies in markets where order flow is stochastic, while \citeauthor{muhle2023pre} \cite{muhle2023pre} studies pre-hedging strategies to improve trading outcomes. These works emphasize the importance of managing inventory and market impact, particularly when brokers face informed or uninformed counterparties. Some closely related works are \citeauthor{nutz2023unwinding} \cite{nutz2023unwinding} and \citeauthor{bergault2025hedge} \cite{bergault2025hedge}, particularly suited for the context of central risk books. In contrast to these works, our paper examines a Stackelberg game in which a broker, possessing private information about the price drift, faces a continuum of uninformed traders. The goal is to determine how the broker can strategically externalize order flow while withholding key information from the market. The choice of a Stackelberg game is motivated by several applications. First, although in theory the broker’s activity in an interbank market is not perfectly observable by traders, some participants can still detect recurring execution patterns and infer information about the broker’s strategy, particularly if the broker plays a major role in providing liquidity. Moreover, \citeauthor{schoneborn2009liquidation} \cite{schoneborn2009liquidation} show that in certain contexts (specifically, when temporary market impact outweighs permanent impact), an agent seeking to liquidate their inventory may have an incentive to disclose their intention so that competitors supply liquidity: in such a context, our paper provides the optimal way to reveal intentions without disclosing too much information to uninformed traders. Finally, an obvious and increasingly relevant application is that of Automated Market Makers (AMMs) in decentralized finance. Indeed, \citeauthor{milionis2022automated} \cite{milionis2022automated} demonstrate that, in order to limit losses for liquidity providers, an AMM benefits from hedging through an externalization strategy on a centralized exchange (or potentially another decentralized exchange). These protocols operate under publicly predefined rules that are accessible to all (via smart contracts): the fact that the strategy is observable by all agents is an inherent characteristic of the decentralized system. An AMM deciding to adjust its externalization strategy based on private information must therefore carefully consider the information it is about to reveal.\\

Information asymmetry has long been a central theme in algorithmic trading research. For instance, \citeauthor{muhlekarbe2017information} \cite{muhlekarbe2017information} and \citeauthor{cont2023fast} \cite{cont2023fast} examine how traders with short-term informational advantages can exploit them to profit at the expense of slower or less informed traders. Similarly, \citeauthor{herdegen2023liquidity} \cite{herdegen2023liquidity} analyze liquidity provision under adverse selection, a scenario where one party holds superior information. There is also a rich literature on insider trading that started with  \citeauthor{kyle1989informed} \cite{kyle1989informed, kyle1985continuous}, \citeauthor{back1992insider} \cite{back1992insider} -- more recent work include \citeauthor{campi2011dynamic} \cite{campi2011dynamic} and \citeauthor{ccetin2021pricing} \cite{ccetin2021pricing}. A particularly relevant study by \citeauthor{cartea2022brokers} \cite{cartea2022brokers} examines a Stackelberg game where an informed broker faces both informed and uninformed traders, providing a useful framework for understanding how brokers can optimize trading strategies in the presence of information asymmetry. \citeauthor{bergault2024mean} \cite{bergault2024mean} extends this framework to consider a broker interacting with a large number of informed traders, each of whom may have different beliefs on the drift of the price process (see \citeauthor{baldacci2023mean} \cite{baldacci2023mean} for a related framework studying the optimal quotes of a broker facing a large number of informed traders).\\

Our work is grounded in the theory of mean field games (MFG), initially developed in the seminal contributions of Lasry and Lions \cite{lasry2006jeux, lasry2006jeuxii, lasry2007mean}, as well as Caines, Huang, and Malhamé \cite{huang2006large, huang2007large, huang2007invariance}. Specifically, we focus on MFG with a major player, a framework introduced by \citeauthor{huang2010large} \cite{huang2010large} and further explored in \citeauthor{bensoussan2016mean} \cite{bensoussan2016mean}, \citeauthor{carmona2017alternative} \cite{carmona2017alternative}, \citeauthor{carmona2016probabilistic} \cite{carmona2016probabilistic}, and \citeauthor{nourian2013eps} \cite{nourian2013eps}. Other relevant studies include \citeauthor{bertucci2020strategic} \cite{bertucci2020strategic}, and \citeauthor{cardaliaguet2020remarks} \cite{cardaliaguet2020remarks}. This theory has been applied to optimal trading problem for instance in \citeauthor{cardaliaguet2018mean} \cite{cardaliaguet2018mean} and \citeauthor{carmona2015probabilistic} \cite{carmona2015probabilistic}. In our model, similar to \citeauthor{bensoussan2016mean} \cite{bensoussan2016mean}, \citeauthor{guo2022optimization} \cite{guo2022optimization}, \citeauthor{moon2018linear} \cite{moon2018linear}, and \citeauthor{nguyen2012mean} \cite{nguyen2012mean}, we analyze a Stackelberg equilibrium where the major player acts as the leader. This approach is closely related to principal-agent problems involving a large number of agents, as explored in \citeauthor{carmona2017alternative} \cite{carmona2017alternative}, \citeauthor{elie2019tale} \cite{elie2019tale}, \citeauthor{elie2019contracting} \cite{elie2019contracting}, \citeauthor{elie2021mean} \cite{elie2021mean}, \citeauthor{hubert2023continuous} \cite{hubert2023continuous}, and \citeauthor{nutz2019mean} \cite{nutz2019mean}. Of particular relevance is the recent work by \citeauthor{djete2023stackelberg} \cite{djete2023stackelberg}, which bridges the gap between Stackelberg equilibria in games with a finite number of players and mean field games involving a major player.\\

There is still limited understanding of MFG problems involving information asymmetry. \citeauthor{sen2016mean} \cite{sen2016mean} and \citeauthor{firoozi2020epsilon} \cite{firoozi2020epsilon} explore scenarios in which small players only partially observe the actions of a major player, using nonlinear filtering techniques to reconstruct a fully observed system. In contrast, \citeauthor{casgrain2020mean} \cite{casgrain2020mean} examines a setting where agents hold differing beliefs about the underlying model. \citeauthor{bertucci2022mean} \cite{bertucci2022mean} studies a problem closer to the one considered here, addressing MFGs with shared but incomplete information among players and developing an associated master equation, although the players are symmetric in their roles. More recently, \citeauthor{becherer2023mean} \cite{becherer2023mean} investigates situations in which players must expend effort to acquire information about their own state. In a related line of research, \citeauthor{bergault2024mfg} \cite{bergault2024mfg} proposes a framework for Stackelberg games with numerous small agents, where the major player holds private information on a random variable associated with the index.\\


This paper builds on the framework introduced by \citeauthor{bergault2024mfg} \cite{bergault2024mfg}, contributing to the literature by presenting a game-theoretic model in which an informed broker interacts with a continuum of uninformed traders, modeled as a Stackelberg game. The broker, acting as the leader, makes strategic externalization decisions while concealing their informational advantage. The uninformed traders, as followers, respond to the broker’s trades without knowledge of the true drift.\\

Unlike \citeauthor{bergault2024mfg} \cite{bergault2024mfg}, where all players engage in an open-loop game and a relaxed problem for the broker is considered, we propose an open-loop formulation for the broker and a closed-loop formulation for the traders. We solve for the broker’s optimal strategy, deriving explicit results on how the broker balances externalization and inventory management in this asymmetric information setting. In particular, we establish the existence of a deterministic critical time at which the broker should fully disclose their private information, while ensuring it remains completely hidden until that moment. These results highlight the impact of private information on market dynamics, particularly with respect to liquidity provision and consumption.\\

While the most ideal model would involve a closed-loop formulation for all players, a detailed analysis of such a problem, incorporating the master equation and addressing a Nash equilibrium for asymmetric players with incomplete information, remains challenging and poorly understood. Advancing our understanding in this direction could be highly valuable for future research.\\

The remainder of this paper is structured as follows. Section \ref{model_setting_and_main_conclusion} provides a formal description of the model and heuristically states the main result for readers primarily interested in the outcomes. Section \ref{section_comments_limitations} formulates the Stackelberg game with an infinite number of traders, while Section \ref{section_solution} derives the MFG equilibrium for the uninformed traders and the broker's optimal externalization strategy. Finally, in Section \ref{section_comments_finite}, we demonstrate that the broker's optimal strategy in the mean-field setting is approximately optimal in the finite \( N \)-trader game.

\section{Mathematical Model and Main Results}\label{model_setting_and_main_conclusion}

In this section, we aim to establish the mathematical model of an informed broker interacting with multiple traders in a financial market. The discussion will focus on constructing the mathematical framework for the corresponding financial problem, which involves determining the optimal hedging strategies for an informed broker interacting with multiple uninformed traders. The discussion will not be fully rigorous at this stage. A rigorous formalization of the problem, both in the mean-field setting and in the finite case, will be presented in sections \ref{section_comments_limitations} and \ref{section_comments_finite}, respectively.

\subsection{Market Setup and Dynamics}\label{subsec_Model}

We consider a trading horizon \( T > 0 \) and a probability space \( \big(\Omega, \mathcal{F}, \mathbb{P}\big) \), under which all stochastic processes are defined (see Remark \ref{enlarge the space}). Let \( N \in \mathbb{N}^\star \) denote the number of traders in a market consisting of a single asset whose price process is denoted by \( S \). We denote by \( \nu_t^n \) and \( \nu_t^B \), respectively, the execution rates of trader \( n \) and the broker on the stock. The price process of the stock \( (S_t)_{t \in [0,T]} \) evolves according to the following dynamics:
\[
\d S_t = \left(b\,\bar{\nu}_t + \mu \right) \d t + \sigma \d W_t,
\]
where \( \sigma > 0 \) is a constant, \( \mu \) is a real-valued random variable with finite second moment, and \( \bar{\nu}_t := {\scriptstyle \frac{1}{N} \sum\nolimits_{n=1}^N }\nu_t^n \) represents the average execution rate of the traders. In this setup, the traders have a linear permanent impact on the price, which is observable by all market participants.\\

The inventory and cash processes of trader \( n \) are denoted by \( (Q^n_t)_{t \in [0,T]} \) and \( (X^n_t)_{t \in [0,T]} \), respectively. Trader \( n \) interacts with the broker at rate \( (\nu_t^n)_{t \in [0,T]} \), where \( \nu^n_t > 0 \) corresponds to buying and \( \nu^n_t < 0 \) corresponds to selling. The inventory dynamics and cash dynamics are given by:
\[
\d Q^n_t = \nu^n_t \d t,\quad
\d X^n_t = -\nu^n_t \big(S_t + \eta \nu^n_t \big) \d t,
\]
where \( \eta > 0 \) is the transaction cost charged by the broker (identical for all traders). The initial inventories \( (Q_0^n)_{1 \leq n \leq N} \), the random variable \( \mu \), and the Brownian motion \( W \) are assumed to be mutually independent.\\

We denote by \( (N Q^B_t)_{t \in [0,T]} \) and \( (N X^B_t)_{t \in [0,T]} \) the inventory and cash processes of the broker, respectively. The scaling factor $N$ is introduced to facilitate the mean-field limit analysis later. This adjustment is particularly useful when introducing the mean-field limit. The broker, who executes the trades of the traders, can also externalize trades in a lit market at rate \( (N \nu^B_t)_{t \in [0,T]} \). The broker's inventory and cash process evolve according to the dynamics:
\[
\d Q^B_t = \left(\nu^B_t - \bar \nu_t \right) \d t, \quad
\d X^B_t = \frac{1}{N} \sum_{n=1}^N \nu^n_t \big(S_t + \eta \nu^n_t \big) \d t - \nu^B_t \big(S_t + \eta^B \nu^B_t \big) \d t,
\]
where \( \eta^B > 0 \) is the execution cost associated with trading in the lit market.

\begin{remark}\label{enlarge the space}
Specifically, we would consider the problem on the probability space 
$\left( \Omega \times \mathbb{R}, \mathcal{F} \otimes \mathcal{B}(\mathbb{R}), \mathbb{P} \otimes \mathbb{Q} \right)$, where $(\Omega, \mathcal{F})$ is rich enough to support the independent random variables $\mu$, \( (Q_0^n)_{1 \leq n \leq N} \), and \( W \), while
the probability measure \( \mathbb{Q} \) could also be part of the broker's control. This would allow the broker to enrich the value of \( \mu \) without changing the problem to be solved, especially when $\mu$ is a discrete random variable. Thus, to simplify our notation and without loss of generality, we assume that \( \sigma(\mu) \) is already rich enough to ensure the problem is non-trivial.
\end{remark}

\subsection{Optimization Problems}\label{subsec_Model_Players}


Knowing the responses of the broker and other traders, the $n$-th trader aims to maximize the following objective function:
\begin{align*}
\mathbb{E} \left[ X^n_T + Q^n_T S_T - a\,\left( Q^n_T \right)^2 - \phi \int_0^T \left( Q^n_t \right)^2 \d t \right],
\end{align*}
over their set of admissible controls $(\nu^n_t)_{t \in [0,T]}$, where  $a,\phi > 0$ correspond to the risk aversion of the informed trader, and are identical for all traders.
By Ito's formula, it is easy to see that this is equivalent to maximizing
\begin{align}\label{model_setting_and_main_conclusion:tradersproblem}
\mathbb{E} \left[ \int_0^T \left\{ Q^n_t \left(b\,\bar \nu_t + \mu \right) - \eta \left( \nu^n_t \right)^2 - 2\,a Q^n_t \nu^n_t - \phi \left( Q^n_t \right)^2 \right\} \d t \right].
\end{align}

Knowing the responses of the traders, the broker seeks to maximize the following objective function:
\begin{align*}
\mathbb{E} \left[ X^B_T + Q^B_T S_T - a^B\left(Q^B_T\right)^2 - \phi^B \int_0^T \left( Q^B_t \right)^2 \d t \right],
\end{align*}
over his set of admissible controls $(\nu^B_t)_{t \in [0,T]}$, where $a^B, \phi^B > 0$ correspond to the broker's risk aversion.
Again, by Ito's formula, it is easy to see that this is equivalent to maximizing
\begin{align}\label{model_setting_and_main_conclusion:brokersproblem}
\mathbb{E} \left[ \int_0^T \left\{ Q^B_t \left(b \bar \nu_t + \mu \right) + \eta \frac{1}{N} \sum_{n=1}^N \left(\nu^n_t \right)^2 - \eta^B \left( \nu^B_t \right)^2 - 2\,a^B Q^B_t \left( \nu^B_t - \bar \nu_t \right) - \phi^B \left( Q^B_t \right)^2 \right\} \d t \right].
\end{align}

\begin{remark}\label{drift_for_traders}
    At time 0, the drift \( \mu \) is revealed only to the broker. Consequently, traders can infer the drift term solely based on the information leaked through the broker's execution rate. Mathematically, this is expressed as \( \mu_t := \mathbb{E} [ \mu | \sigma(\nu^B_s \, ; ~ 0 \leq s \leq t) ] \), which forms a martingale.
\end{remark}

\subsection{Main Results}

To obtain an approximate solution for the finite game, we propose a mean field game approach. Note that the propagation of chaos suggests that, in the limit, the execution rates of individual traders become independent, conditional on \( \nu^B \). In this framework, the broker no longer interacts with \( N \) individual traders but rather the distribution infinitely many traders.\\

Our main results can be stated informally as follows. We have identified the broker's optimal information-leaking strategy and the corresponding optimal control in the mean field setting. Furthermore, we have proven that this strategy remains approximately optimal in the finite case, as detailed in Theorem \ref{best_control_for_broker}, Remark \ref{rmk on b}, and Theorem \ref{thm.finite_1}. Here we give the informal version of them.

\begin{result}\label{Main_Result}
Assume that the initial inventories of the traders are i.i.d. random variables with finite fourth moments, and that the linear permanent impact of the traders' execution rates on the stock price \( b \) is sufficiently small.\\\vspace{-1.8ex}

In the case of infinitely many traders, there exists a critical time \( t_c \in [0,T] \) such that the broker's optimal information-leaking strategy is that no information about the drift term \( \mu \) is revealed by the broker before the critical time, and all information is disclosed at the critical time. The broker's optimal control is given by an explicit formula, piecewise analytic on \( [0,t_c] \) and \( (t_c, T] \).\\\vspace{-1.8ex}

Moreover, in the case of finitely many traders, there exists \( N_0 \in \mathbb{N} \) such that for all \( N \geq N_0 \), the optimal control in the infinite case is a \( C/\sqrt{N} \)-optimal one for the broker, where \( C \) is a constant independent of \( N \).
\end{result}

\section{Mean Field Game and Stackelberg Problem}\label{section_comments_limitations}

In this section, we formalize the Stackelberg problem between the broker and the representative trader in the mean field setting. The original game, which involves one broker and many traders, simplifies to a game between the broker and a single representative trader, thereby significantly reducing the problem's complexity. We will present the available information for both the representative trader and the broker, define their respective optimization problems, and outline the game they participate in, to properly introduce the corresponding Stackelberg framework.

\subsection{MFG Equilibrium of the Traders}\label{section_mfg_equilibrium_for_traders}
In this section, we suppose that the execution rate of the broker, \( (\nu^B_t)_{t \in [0,T]} \), is fixed. We first consider the admissible strategies for the representative trader, which are defined as follows:
\[
\mathcal{A}^r = \left\{ \nu = (\nu_t)_{t \in [0,T]} \left| \nu_t \text{ is } \mathcal{F}^{r}_t \text{-progressively measurable, and } \mathbb{E} \left[ \int_0^T \nu_t^2 \d t \right] < +\infty \right. \right\},
\]
where \( \mathcal{F}^{r} \) denotes the filtration generated by the initial condition of the representative trader's inventory process \( Q_0 \) and the broker's execution rate \( (\nu^B_t)_{t \in [0,T]} \).\\

To introduce the mean field game of traders, given the broker’s execution rate \( \nu^B \), for \( \nu \in \mathcal{A}^r \), we consider the associated inventory process \( (Q_t)_{t \in [0,T]} \) of the representative trader as
\[
Q_t = Q_0 + \int_0^t \nu_u \, \d u,
\]
and the mean field execution rate \( (\bar \nu_t)_{t \in [0,T]} \) of the traders as
\[
\bar \nu_t = \int_{\mathbb{R}} x\, m_t(\d x),
\]
where the process \( (m_t)_{t \in [0,T]} \) takes values in \( \mathcal{P}_2 (\mathbb{R}) \), representing the distribution of the other traders' execution rates at time \( t \), conditional on \( \mathcal{F}_t := \sigma(\{\nu^B_s ; 0 \leq s \leq t\}) \).\\

Hence, given the broker's execution rate \( \nu^B \), the representative trader should try to maximize the objective function as follow, which is the mean field version of \eqref{model_setting_and_main_conclusion:tradersproblem},
\begin{align}\label{trader_obj_infinite}
H[\nu^B](\nu) = \mathbb{E} \left[ \int_0^T \left\{ Q_t \left(b\,\bar \nu_t + \mu \right) - \eta \left( \nu_t \right)^2 - 2\,a Q_t \nu_t - \phi \left( Q_t \right)^2 \right\} \d t \right],
\end{align}
with \( b > 0 \) representing the market impact of the traders, \( \eta \) representing the transaction costs charged by the broker to the traders, and \( a, \phi > 0 \) corresponding to the risk aversion of the informed traders.\\

Finally, we can give the following definition,
\begin{definition}\label{minorsol}
Given an admissible control $\nu^B$ of the broker, an MFG equilibrium associated with $\nu^B$ is a pair $(\nu^{\nu^B}, m^{\nu^B})$ of processes where $\nu^{\nu^B} \in \mathcal{A}^r$ and $m^{\nu^B}$ is an $\mathcal{F}^{r}_t$-progressively measurable random process taking values in $\mathcal{P}_2(\mathbb{R})$, and 
\begin{itemize}
    \item[$(i)$] $H[\nu^B](\nu^{\nu^B}) = \underset{\nu \in \mathcal{A}^r}{\sup} H[\nu^B](\nu);$
    \item[$(ii)$] $m^{\nu^B}_t$ is the distribution of $\nu^{\nu^B}_t$ conditional on $\mathcal{F}_t = \sigma \left( \{ \nu^B_s; s \leq t \} \right)$ for Lebesgue-almost every $t \in [0,T]$.
\end{itemize}
\end{definition}

\subsection{Stackelberg Equilibrium of the Informed Broker}\label{section_stackelberg_equilibrium_for_broker}

The set of admissible strategies for the informed broker is defined as:
\[
\begin{aligned}
\mathcal{A}^B = \left\{ \nu^B = (\nu^B_t)_{t \in [0,T]} \ \middle| \ 
\begin{array}{l}
\nu^B_t \text{ is independent of } Q_0 \text{ and } 
\mathbb{E} \left[ \int_0^T (\nu^B_t)^2 \, \d t \right] < +\infty.
\end{array}
\right\}.
\end{aligned}
\]
and we write \( \mathcal{F}^{B} \) for the filtration generated by the \( \nu^B \), chosen by the broker in the mean field game. Note that if \( \mu \) is a constant, the problem is trivial and not considered here. If \( \mu \) takes only discrete values, as in \citeauthor{bergault2024mfg} \cite{bergault2024mfg}, the broker would then be allowed to enlarge the probability space as we mentioned in Remark \ref{enlarge the space}. Although the independence of the \( \nu^B \) with \( Q_0 \) may seem strange at the beginning, it is saying nothing but that the broker is playing an open-loop game, since he knows nothing about the conditions of the traders throughout the time.\\

To introduce the Stackelberg game of the broker, given the representative trader's execution rate \( \nu \), for $\nu^B \in \mathcal{A}^B$, we define the associated inventory process of the informed broker as
\[
Q^B_t = Q^B_0 + \int_0^t \left( \nu^B_u - \bar \nu_u \right) \d u.
\]

Hence, given the distribution \( (m_t)_{t \in [0,T]} \) of the traders, taking the mean field version of problem \eqref{model_setting_and_main_conclusion:brokersproblem}, the broker would try to optimize the following objective function,
\begin{equation}\label{broker_obj_infinite}
\scalebox{0.9}{$
\begin{aligned}
    H^{B}[m](\nu^B) = \mathbb{E} \left[ \int_0^T \left\{ Q^B_t \left(b\,\bar \nu_t + \mu \right) + \eta \int_{\mathbb{R}} x^2 m_t(\d x) - \eta^B \left( \nu^B_t \right)^2 - 2\,a^B Q^B_t \left( \nu^B_t - \int_\mathbb{R} x\, m_t(\d x) \right) - \phi^B \left( Q^B_t \right)^2 \right\} \d t \right].
\end{aligned}
$}
\end{equation}
with $\eta^B > 0$ representing the transaction costs in the lit market, and $a^B, \phi^B > 0$ the risk aversion parameters for the broker.\\

Finally, we can give the following definition,
\begin{definition}\label{majorsol}
The problem of the broker consists of solving
$$\underset{\nu^B \in \mathcal{A}^B}{\sup} \quad \underset{(\nu^{\nu^B}, m^{\nu^B})}{\inf} H^{B}[m^{\nu^B}] (\nu^B),$$
where the infimum is taken over all MFG equilibria $(\nu^{\nu^B}, m^{\nu^B})$ associated with $\nu^B$.
\end{definition}
We will find in the next section that in such an MFG the equilibrium is in fact unique, so that no infimum is actually required in the definition.

\section{Optimal Information Leaking Strategy}\label{section_solution}

In this part, we aim to determine the optimal strategies for both the representative trader and the broker. We achieve this by solving the backward stochastic Hamilton-Jacobi equations associated with the optimization problems in Definition \ref{minorsol} and Definition \ref{majorsol}, which are both linear-quadratic optimal stochastic control problems with random coefficients.\\

To this end, we will introduce a specific form of the Hamilton-Jacobi equation and find its unique solution in Subsection \ref{general_HJ_equation}: this part is largely inspired by \citeauthor{bismut1976linear} \cite{bismut1976linear}. 
With this result in mind, and by applying the maximum principle, we will derive the unique optimal control for the representative trader in problem \eqref{minorsol} (Lemma \ref{lem.hatnu}), given the broker's execution rate process \( \nu^B \), in Subsection \ref{Optimal_Control_Representative_Trader_Mean_Field_Setting}. 
Subsequently, we will identify the optimal information-leaking strategy and the broker's associated unique optimal control (Theorem \ref{best_control_for_broker}) in Subsection \ref{Optimal_Control_Broker_Mean_Field_Setting}.

\subsection{Explicit Computation for a Solution to a Backward HJ Equation}\label{general_HJ_equation}

We investigate the solution to an HJ equation of the form 
\begin{equation}\label{HJexp}
\begin{array}{l}
du_t = H_t(x,Du_t)\, dt + dM_t(x)\qquad \text{in } (0,T)\times \mathbb{R}, \\
u_T(x) = g_T(x) \qquad \text{in } \mathbb{R},
\end{array}
\end{equation}
where 
\begin{equation}\label{notgenH}
H_t(x,\xi) = A_t + B_t x + \frac{C_t x^2}{2} + D_t \xi + E_t x\xi + \frac{F_t \xi^2}{2},
\end{equation}
and 
$$
g_T(x) = G_1 + G_2x + \frac{G_3x^2}{2},
$$
with $(A_t,B_t,C_t,D_t,E_t,F_t)$ adapted to a filtration $(\mathcal{F}_t)$ (satisfying the usual assumptions) and bounded, and $(G_1, G_2, G_3)$ being $\mathcal{F}_T$-measurable and bounded. According to previous research (see \citeauthor{bismut1976linear} \cite{bismut1976linear}), we know that there exists a unique solution for such an equation. We look for $(u, M)$ in the form 
$$
u_t(x) = \alpha_t + \beta_t x + \frac{\gamma_t x^2}{2},
$$
and 
$$
M_t(x) = Z_{\alpha,t} + Z_{\beta,t} x + \frac{Z_{\gamma,t} x^2}{2},
$$
where $(\alpha_t, \beta_t, \gamma_t)$ and $(Z_\alpha, Z_\beta, Z_\gamma)$ are adapted to $(\mathcal{F}_t)$, and $(Z_\alpha, Z_\beta, Z_\gamma)$ form an $(\mathcal{F}_t)$-martingale. In view of the equation, $(\alpha_t, \beta_t, \gamma_t)$ and $(Z_\alpha, Z_\beta, Z_\gamma)$ must satisfy: 
\begin{align*}
d\alpha_t + x d\beta_t + \frac{x^2}{2} d\gamma_t = \left( A_t + B_t x + \frac{C_t x^2}{2} + D_t(\beta_t + \gamma_t x) + E_t x(\beta_t + \gamma_t x) + \frac{F_t (\beta_t + \gamma_t x)^2}{2} \right) dt \\
+ dZ_{\alpha,t} + x dZ_{\beta,t} + \frac{x^2}{2} dZ_{\gamma,t}.
\end{align*}
This yields the system 
$$
\left\{
\begin{array}{ll}
(i) & d\gamma_t = \left(C_t + 2E_t \gamma_t + F_t \gamma_t^2 \right) dt + dZ_{\gamma,t}, \qquad \gamma_T = G_3,\\ 
(ii) & d\beta_t = \left( B_t + D_t \gamma_t + E_t \beta_t + F_t \beta_t \gamma_t \right) dt + dZ_{\beta,t}, \qquad \beta_T = G_2,\\ 
(iii) & d\alpha_t = \left( A_t + D_t \beta_t + \frac{F_t \beta_t^2}{2} \right) dt + dZ_{\alpha,t}, \qquad \alpha_T = G_1.
\end{array}\right.
$$
We assume that $C_t$, $E_t$, $F_t$, and $G_3$ are deterministic and such that the Riccati equation 
$$
\frac{d}{dt} \gamma_t = C_t + 2E_t \gamma_t + F_t \gamma_t^2, \qquad  \gamma_T = G_3
$$
has a unique solution. Then $Z_\gamma = 0$. For $\beta_t$ we find 
$$
\beta_t = e^{-\int_t^T (E_s + F_s \gamma_s)ds} G_2 - \int_t^T e^{-\int_t^s (E_r + F_r \gamma_r)dr} (B_s + D_s \gamma_s) ds 
- \int_t^T e^{-\int_t^s (E_r + F_r \gamma_r)dr} dZ_{\beta,s}.
$$
Taking the conditional expectation with respect to $\mathcal{F}_t$, we obtain
$$
\beta_t = e^{-\int_t^T (E_s + F_s \gamma_s)ds} \mathbb{E}[G_2|\mathcal{F}_t] - \int_t^T e^{-\int_t^s (E_r + F_r \gamma_r)dr} \left( \mathbb{E}[B_s|\mathcal{F}_t] + \mathbb{E}[D_s|\mathcal{F}_t]\gamma_s \right) ds.
$$

\subsection{Optimal Control of the Representative Trader}\label{Optimal_Control_Representative_Trader_Mean_Field_Setting}

Throughout this subsection, we suppose that the filtration $\mathcal{F}_t$, representing the information disclosed by the informed broker, is fixed, and we discuss the optimal control of the representative trader in this case, that is, to consider,
$$
\max_{\nu_t} \mathbb{E} \left[ \int_0^T \left\{ Q_t \left(b\,\bar \nu_t + \mu_t \right) - \eta \left( \nu_t \right)^2 - 2\,a Q_t \nu_t - \phi \left( Q_t \right)^2 \right\} \d t \right],
$$
where 
$$
dQ_t = \nu_t \, dt,
$$
and where $\mu_t = \mathbb{E}[\mu\ |\mathcal{F}^r_t] = \mathbb{E}[\mu\ |\mathcal{F}_t]$, and $(\nu_t)$ is $(\mathcal{F}^r_t)$-progressively measurable. In this case, the value function $u$ of the representative trader satisfies an HJ equation of the form \eqref{HJexp}, with
\begin{align*}
H_t(q,\xi) &= \min_{\nu \in \mathbb{R}} \left\{ -\nu\xi - q \left(b\,\bar \nu_t + \mu_t \right) + \eta \nu^2 + 2\,a q \nu + \phi q^2 \right\} \\
& = -q \left(b\,\bar \nu_t + \mu_t \right) + \phi q^2 - \frac{1}{4\eta} |\xi - 2a q|^2,
\end{align*}
where the minimum is reached for $\tilde{\nu}$, given by 
$$
\tilde{\nu} = \tilde{\nu}(q, \xi) = \frac{1}{2\eta} (\xi - 2a q).
$$
With the notation of the previous subsection, we have
$$
A_t = 0, \; B_t = - \left(b\,\bar \nu_t + \mu_t \right), \; C_t = \phi - \frac{a^2}{\eta}, \; D_t = 0, \; E_t = \frac{a}{\eta}, \; F_t = -\frac{1}{2\eta}, \qquad g_T \equiv 0.
$$
Note that, aside from $B_t$, all the coefficients are deterministic. This leads to the system 
\begin{equation}\label{riccati_equation}
    \left\{
        \begin{array}{ll}
        (i) & d\gamma_t = \left(\phi - \frac{a^2}{\eta} + 2\frac{a}{\eta} \gamma_t - \frac{1}{2\eta} \gamma_t^2 \right) dt, \qquad \gamma_T = 0, \\ [1ex]
        (ii) & d\beta_t = \left(- \left(b\,\bar \nu_t + \mu_t \right) + \frac{2a - \gamma_t}{2\eta} \beta_t \gamma_t \right) dt + dZ_{\beta,t}, \qquad \beta_T = 0, \\ [1ex]
        (iii) & d\alpha_t = \left(-\frac{1}{4\eta} \beta_t^2 \right) dt + dZ_{\alpha,t}, \qquad \alpha_T = 0.
        \end{array}\right.
\end{equation}
The first equation has a (deterministic) analytic solution defined on \( [0,T] \), as guaranteed by classical results on ordinary differential equations. On the other hand, 
\begin{align}\label{defbeta}
\beta_t = -\int_t^T \Gamma_{s,t} \mathbb{E} \left[ - \left(b\,\bar \nu_s + \mu_s \right) \Big| \mathcal{F}_t \right] ds \notag
= b \int_t^T \Gamma_{s,t} \mathbb{E}[\bar \nu_s |\mathcal{F}_t] ds + \mu_t \int_t^T \Gamma_{s,t} ds,
\end{align}
where
$$
    \Gamma_{s,t} = \exp \left( \int_s^t \frac{\gamma_r - 2a}{2\eta} \, dr \right).
$$
Hence, the optimal dynamics for the representative trader are 
\begin{equation}\label{defhatnu}
d\hat{Q}_t = \hat{\nu}_t dt = \frac{1}{2\eta} \left( (\gamma_t - 2a)\hat{Q}_t + \beta_t \right) dt, \qquad \hat{Q}_0 = Q_0,
\end{equation}
where $Q_0$ (the initial distribution of the traders' inventories) is a random variable independent of all the others. Thus, 
\begin{equation}\label{defhatQ}
\hat{Q}_t = \Gamma_{0,t} \cdot Q_0 + \int_0^t \frac{1}{2\eta} \Gamma_{s,t} \beta_s ds.
\end{equation}
Moreover, we know that at the equilibrium, $\bar{\nu}$ satisfies $\bar{\nu}_t = \mathbb{E}[\hat{\nu}_t | \mathcal{F}_t]$ for all $t \in [0,T]$. Therefore,
\begin{align*}
    \bar{\nu}_t &= \mathbb{E} \left[ \frac{1}{2\eta} \left( (\gamma_t - 2a) \hat{Q}_t + \beta_t \right) \Big| \mathcal{F}_t \right] \\
    &= \frac{\gamma_t - 2a}{2\eta} \Gamma_{0,t} \cdot \bar{Q}_0 
     + b \frac{\gamma_t - 2a}{4\eta^2} \int_0^t \Gamma_{s,t} \int_s^T \Gamma_{r,s} \mathbb{E}[\bar{\nu}_r |\mathcal{F}_s] dr \, ds \\
     & \quad + \frac{\gamma_t - 2a}{4\eta^2} \int_0^t \Gamma_{s,t} \mu_s \int_s^T \Gamma_{r,s} dr \, ds \\
    & \quad + b \int_t^T \frac{1}{2\eta} \Gamma_{s,t} \mathbb{E}[\bar{\nu}_s |\mathcal{F}_t] ds + \mu_t \int_t^T \frac{1}{2\eta} \Gamma_{s,t} ds.
\end{align*}
On the normed space $\mathcal{D}$ of $\mathcal{F}_t$-progressively measurable processes $\xi$ with finite norm 
$$
\|\xi\|_{\mathcal{D}} := \mathbb{E} \left[ \sup_{0 \leq t \leq T} |\xi_t|^2 \right]^{\frac 12},
$$ 
we define the bounded linear operators $L, \tilde{L}: \mathcal{D} \to \mathcal{D}$ such that for any $\xi \in \mathcal{D}$ and $t \in [0,T]$,
\begin{align}\label{defLtildeL}
    L\xi_t &=  \int_0^t \int_s^T  A_{r,s,t} \mathbb{E}[\xi_r | \mathcal{F}_s] dr ds 
    + \int_t^T B_{s,t} \mathbb{E}[\xi_s | \mathcal{F}_t] ds, \\
    \tilde{L}\xi_t &= \tilde{L}^{C,D}\xi_t = \int_0^t C_{s,t} \xi_s ds + \xi_t D_t,
\end{align}
where
\[
A_{r,s,t} = \frac {\gamma_t - 2a}{4\eta^2} \Gamma_{r,s} \Gamma_{s,t},~ 
B_{s,t} = \frac{1}{2\eta} \Gamma_{s,t},~ 
C_{s,t} = \frac{\gamma_t - 2a}{4\eta^2} \Gamma_{s,t} \int_s^T \Gamma_{r,s} dr,~ 
D_t = \frac{1}{2\eta} \int_t^T \Gamma_{s,t} ds.
\]
Assuming that $b$ is small, i.e., $0 < b < \|L\|_{B(\mathcal{D})}^{-1}$, the fixed point $\bar{\nu}$ exists and is unique in $\mathcal{D}$, and can be written as 
\begin{equation}\label{defbarnu}
    \bar{\nu}_t = (I - bL)^{-1}(z \cdot \bar{Q}_0 + \tilde{L}\mu_\cdot),
\end{equation}
where $z \in C^\omega$ is defined by 
$$
z_t = \frac{\gamma_t - 2a}{2\eta} \Gamma_{0,t}, \qquad t \in [0,T].
$$
Given a pair of continuous maps $(E,F) = (E_{s,t},F_t)$, we note that 
\begin{align*}
L \circ \tilde{L}^{E,F} \mu_t &=  \int_0^t \int_s^T  A_{r,s,t} \mathbb{E}\left[\int_0^r E_{u,r} \mu_u du + \mu_r F_r \Big| \mathcal{F}_s \right] dr ds 
+ \int_t^T B_{s,t} \mathbb{E}\left[\int_0^s E_{u,s} \mu_u du + \mu_s F_s \Big| \mathcal{F}_t \right] ds \\
&=  \int_0^t \int_s^T  A_{r,s,t} \left( \int_0^r E_{u,r} \mu_u du + \mu_s\left(\int_s^r E_{u,r} du + F_r \right)\right) dr ds \\
&\quad + \int_t^T B_{s,t} \left( \int_0^t E_{u,s} \mu_u du + \mu_t\left(\int_t^s E_{u,s} du + F_s \right)\right) ds \\
&= \int_0^t \mu_u \left( \int_u^t \int_s^T  A_{r,s,t} E_{u,r} dr ds\right) du 
  + \int_0^t \mu_s \left( \int_s^T \int_s^r A_{r,s,t} \left( E_{u,r} du + F_r \right) du dr \right) ds \\
&\quad + \int_0^t \mu_u \left( \int_t^T B_{s,t} E_{u,s} ds\right) du + \mu_t \left( \int_t^T B_{s,t}\left(\int_t^s E_{u,s} du + F_s \right) ds \right) \\
&= \tilde{L}^{\tilde{E},\tilde{F}} \mu_t,
\end{align*}
with 
\begin{equation}\label{defhatL}
\left\{
\begin{array}{ll}
\tilde{E}_{s,t} &= \int_s^t \int_u^T  A_{r,u,t} E_{s,r} dr du + \int_s^T A_{r,s,t}\left( \int_s^r E_{u,r} du + F_r \right) dr + \int_t^T B_{u,t} E_{s,u} du,\\
\tilde{F}_t &= \int_t^T B_{s,t} \left(\int_t^s E_{u,s} du + F_s \right) ds.    
\end{array}
\right.
\end{equation}
Let $\hat{L} : \mathcal{C} \to \mathcal{C}$ be the operator such that $(\tilde{E}, \tilde{F}) = \hat{L}[E,F] = (\hat{L}_1[E,F], \hat{L}_2[E,F])$, where $\mathcal{C}$ is a normed space of function pairs $(E,F)$ in $C^0([0,T]^2) \times C^0([0,T])$ with finite norm $\|E\|_{C^0} \vee \|F\|_{C^0}$. Restricting $b$ further if necessary, i.e., $0 < b < \min\{\|L\|_{B(\mathcal{D})}^{-1}, \|\hat{L}\|_{B(\mathcal{C})}^{-1}\}$, we find that 
\begin{align*}
\beta_s &= b  \int_s^T \Gamma_{r,s} \mathbb{E}[\bar{\nu}_r | \mathcal{F}_s] dr + \mu_s \int_s^T \Gamma_{r,s} dr \\
&= b  \int_s^T \Gamma_{r,s} (I - bL)^{-1} z_r dr \cdot \bar{Q}_0 + \mu_s \int_s^T \Gamma_{r,s} dr 
 + \sum_{k=0}^\infty b^{k+1} \int_s^T \Gamma_{r,s} \mathbb{E}\left[L^k \circ \tilde{L} \mu_r | \mathcal{F}_s\right] dr \\
&= b  \int_s^T \Gamma_{r,s} (I - bL)^{-1} z_r dr \cdot \bar{Q}_0 + \mu_s \int_s^T \Gamma_{r,s} dr 
 + \sum_{k=0}^\infty b^{k+1} \int_s^T \Gamma_{r,s} \mathbb{E}\left[ \int_0^r \hat{L}_1^k[C,D]_{u,r} \mu_u du + \hat{L}_2^k[C,D]_r \mu_r \Big| \mathcal{F}_s \right] dr \\
&= \beta_s^1 \cdot \bar{Q}_0 + \int_0^s \beta^2_u \mu_u du + \beta_s^3 \mu_s,
\end{align*}
where
\begin{equation}\label{beta_parameter_definition}
\left\{
\begin{array}{ll}
    \beta_s^1 &= b  \int_s^T \Gamma_{r,s} (I - bL)^{-1} z_r dr, \\
    \beta_s^2 &= b  \int_s^T \Gamma_{r,s} \sum_{k=0}^\infty b^k \hat{L}_1^k [C,D]_{u,r} dr, \\
    \beta_s^3 &= \int_s^T \Gamma_{r,s} \left[ 1 + \sum_{k=0}^\infty b^{k+1} \left( \int_s^r \hat{L}_1^k[C,D]_{u,r} du + \hat{L}_2^k[C,D]_r \right) \right] dr.
\end{array}
\right.
\end{equation}

Hence, by combining the equation above with equations \eqref{defhatQ} and \eqref{defbarnu}, we have the expression for $\hat{\nu}$:
{\small
\begin{align*}
    \hat{\nu}_t &= \frac{\gamma_t - 2a}{2\eta} \left( \Gamma_{0,t} \cdot Q_0 + \int_0^t \frac{1}{2\eta} \Gamma_{s,t} \beta_s ds \right) + \frac{1}{2\eta} \beta_t \\
    &= \frac{\gamma_t - 2a}{2\eta} \left( \Gamma_{0,t} \cdot Q_0 
    + \bar{Q}_0 \cdot \int_0^t \frac{1}{2\eta} \Gamma_{s,t} \beta_s^1 ds 
    + \int_0^t \frac{1}{2\eta} \Gamma_{s,t} \int_0^s \beta_u^2 \mu_u du ds 
    + \int_0^t \frac{1}{2\eta} \Gamma_{s,t} \beta_s^3 \mu_s ds \right) \\
    &\quad + \frac{1}{2\eta} \left( \bar{Q}_0 \beta^1_t 
    + \int_0^t \beta_s^2 \mu_s ds + \beta_t^3 \mu_t \right) \\
    &= \bar{A}_t \bar{Q}_0 + \bar{B}_t Q_0 + \int_0^t \bar{C}_{s,t} \mu_s ds + \bar{D}_t \mu_t,
\end{align*}
}
where
\begin{equation}\label{barA_barB_barC_barD_definition}
    \left\{
    \begin{aligned}
        \bar{A}_t &= \frac{\beta_t^1}{2\eta} + \frac{\gamma_t - 2a}{4 \eta^2} \int_0^t \Gamma_{s,t} \beta_s^1 ds, \\
        \bar{B}_t &= \frac{\gamma_t - 2a}{ 2 \eta } \Gamma_{0,t}, \\
        \bar{C}_{s,t} &= \frac{\gamma_t - 2a}{ 4 \eta^2 } \left( \beta_s^2 \int_s^t \Gamma_{r,t} dr + \Gamma_{s,t} \beta_s^3 \right) + \frac{\beta_s^2}{2\eta}, \\
        \bar{D}_t &= \frac{\beta_t^3}{2\eta}.
    \end{aligned}
    \right.
\end{equation}

\begin{lemma}\label{lem.hatnu}
Assuming $b > 0$ is small enough, there exists $(\bar{A}, \bar{B}, \bar{C}, \bar{D}) \in C^{\omega}([0,T] \times [0,T], \mathbb{R}^4)$, as defined in \eqref{barA_barB_barC_barD_definition}, such that for any martingale $(\mu_t)$, the mean field control $\hat{\nu} = \hat{\nu}(Q_0,\mu_\cdot)$ is given by
\begin{equation}\label{hatnu_solution}
\hat{\nu}_t = \hat{\nu}(\mu_\cdot)_t = \bar{A}_t \bar{Q}_0 + \bar{B}_t Q_0 + \int_0^t \bar{C}_{s,t} \mu_s ds + \mu_t \bar{D}_t, \qquad t \in [0,T].
\end{equation}
Hence, we also have the conditional expectation of the first and second moments of $\hat{\nu}$ with respect to $\mathcal{F}_t$:
\begin{equation}\label{first_second_conditional_moment_of_hatnu}
\left\{
\begin{aligned}
    \bar \nu_t &= \mathbb E [\hat \nu_t | \mathcal F_t] = ( \bar A_t +\bar B_t ) \bar Q_0 + \int_0^t \bar C_{s,t} \mu_s   ds+ \mu_t \bar D_t, \\
    \bar M_t   &:= \mathbb E [\hat \nu_t^2 | \mathcal F_t] = \left( \bar A_t^2 + 2 \bar A_t \bar B_t \right) \cdot \bar Q_0^2 + \bar B_t^2 \cdot \mathbb E[Q_0^2] \\
     & \qquad\qquad + 2 \left( \bar A_t + \bar B_t \right) \cdot \left( \int_0^t \bar C_{s,t} \mu_s   ds+ \mu_t \bar D_t \right) \cdot \bar Q_0 + \left( \int_0^t \bar C_{s,t} \mu_s   ds+ \mu_t \bar D_t \right)^2,
\end{aligned}
\right.\qquad t \in [0,T].
\end{equation}
\end{lemma}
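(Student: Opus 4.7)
The plan is to collect and verify the calculations already laid out in Subsections \ref{general_HJ_equation} and \ref{Optimal_Control_Representative_Trader_Mean_Field_Setting}, then check that the resulting coefficients are analytic and that the conditional moments have the stated form. First, I would invoke the quadratic ansatz of Subsection \ref{general_HJ_equation} for the HJ equation of the representative trader: since the Hamiltonian has the form \eqref{notgenH} with $C_t,E_t,F_t$ \emph{deterministic} constants, the Riccati equation \eqref{riccati_equation}(i) admits a unique real-analytic solution $\gamma\in C^\omega([0,T])$ by the Cauchy--Kovalevskaya / classical ODE theorem, while \eqref{riccati_equation}(ii) yields the representation \eqref{defbeta} of $\beta_t$ as a conditional expectation. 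A standard verification via the pointwise minimizer $\tilde\nu(q,\xi)=\tfrac{1}{2\eta}(\xi-2aq)$ then gives the feedback form \eqref{defhatnu} and hence \eqref{defhatQ}.

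Next, I would close the mean-field equation by writing $\bar\nu_t=\mathbb{E}[\hat\nu_t\,|\,\mathcal F_t]$, substituting \eqref{defhatQ} and \eqref{defbeta}, and observing that the resulting fixed-point equation reads $\bar\nu = z\,\bar Q_0 + \tilde L\mu_\cdot + bL\bar\nu$ with $L,\tilde L$ as in \eqref{defLtildeL}. Choosing $b$ small enough that $b\|L\|_{B(\mathcal D)}<1$ makes $I-bL$ invertible by the Neumann series, giving \eqref{defbarnu} and the existence/uniqueness of $\bar\nu\in\mathcal D$. This step uses only boundedness of the kernels $A_{r,s,t},B_{s,t}$, which follows from boundedness of $\gamma$ and thus of $\Gamma_{s,t}$.

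The substantive (and most delicate) step is the third one: showing that the series $\sum_k b^{k+1}\int \Gamma_{r,s}\,\mathbb{E}[L^k\tilde L\mu_r\,|\,\mathcal F_s]\,dr$ collapses into the clean form $\int_0^s \beta^2_u\mu_u\,du+\beta^3_s\mu_s$ announced in \eqref{beta_parameter_definition}. This is where the \emph{martingale property} of $\mu$ is essential: in the derivation of $L\circ\tilde L^{E,F}\mu_t$ preceding \eqref{defhatL} one repeatedly replaces $\mathbb{E}[\mu_r|\mathcal F_s]$ by $\mu_s$ for $s\le r$, which is exactly what permits the triple integral to be reorganised into a new $\tilde L^{\tilde E,\tilde F}$ acting on $\mu_\cdot$. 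I would then bound $\|\hat L\|_{B(\mathcal C)}$ and further restrict $b<\min\{\|L\|_{B(\mathcal D)}^{-1},\|\hat L\|_{B(\mathcal C)}^{-1}\}$ so that $\sum_k b^k\hat L^k[C,D]$ converges uniformly in $\mathcal C$, yielding continuous kernels $\beta^1,\beta^2,\beta^3$. Real-analyticity of $\bar A,\bar B,\bar C,\bar D$ then follows because each ingredient ($\gamma$, $\Gamma_{s,t}$, $z$, and integrals of analytic integrands) is real-analytic, and the uniformly convergent series of analytic functions is analytic.

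Finally, plugging $\beta_s = \beta^1_s\bar Q_0+\int_0^s\beta^2_u\mu_u\,du+\beta^3_s\mu_s$ into $\hat\nu_t=\tfrac{1}{2\eta}(\beta_t+(\gamma_t-2a)\hat Q_t)$ with $\hat Q_t$ given by \eqref{defhatQ} produces, after elementary regrouping, the decomposition \eqref{hatnu_solution} with the explicit $\bar A,\bar B,\bar C,\bar D$ of \eqref{barA_barB_barC_barD_definition}. The conditional moments \eqref{first_second_conditional_moment_of_hatnu} follow by taking $\mathbb{E}[\cdot\,|\,\mathcal F_t]$: $Q_0$ is independent of $\mathcal F_t$ so $\mathbb{E}[Q_0|\mathcal F_t]=\bar Q_0$ and $\mathbb{E}[Q_0^2|\mathcal F_t]=\mathbb{E}[Q_0^2]$, the coefficients $\bar A_t,\bar B_t,\bar C_{s,t},\bar D_t$ are deterministic and hence $\mathcal F_t$-measurable, and $\mathbb{E}[\mu_s|\mathcal F_t]=\mu_t$ for $s\le t$ by the martingale property of $\mu$ (Remark \ref{drift_for_traders}). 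The main obstacle is the third step: making the operator-theoretic manipulation of $L\circ\tilde L^{E,F}$ rigorous and controlling the norm $\|\hat L\|_{B(\mathcal C)}$ uniformly, so as to identify the exact smallness threshold for $b$ under which all Neumann series converge simultaneously.
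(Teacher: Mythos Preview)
Your proposal is correct and follows essentially the same path as the paper: the derivation of \eqref{hatnu_solution} is precisely the sequence of computations in Subsection \ref{Optimal_Control_Representative_Trader_Mean_Field_Setting} preceding the lemma, and the smallness threshold for $b$ is handled exactly as in the proof of Remark \ref{rmk on b}. One small slip in your final paragraph: for $s\le t$ one has $\mathbb{E}[\mu_s\,|\,\mathcal F_t]=\mu_s$ by $\mathcal F_t$-measurability (not $\mu_t$, and not by the martingale property), which is all that is needed since the integral in \eqref{hatnu_solution} runs only over $[0,t]$.
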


\begin{remark}\label{rmk on b}
    By "$\,b>0$ small enough", we mean precisely that,
    \begin{equation}\label{bound of b}
        0 < b < \exp\left( - 2T\frac{a \vee \sqrt{\eta\phi}}{\eta}\right) \cdot \frac{\eta^2\,a^{-1} \wedge \eta^{\frac 32}\,\phi^{-\frac 12} \wedge \eta}{T^2+T}.
    \end{equation}
\end{remark}

\begin{proof}
    Recall the definition of $L$ in formula \eqref{defLtildeL}. 
    Note that by the explicit formula of $\gamma$, we have 
    $$\|A\|_{C^0([0,T]^3)} \leq \frac{a \vee \sqrt{\eta\phi}}{2\eta^2}\exp\left(2T\frac{a \vee \sqrt{\eta\phi}}{\eta}\right), \qquad\|B\|_{C^0([0,T]^2)}\leq \frac{1}{2\eta}\exp\left(2T\frac{a \vee \sqrt{\eta\phi}}{\eta}\right).$$ 
    Hence, by Doob's inequality, we have
    {\small
    \begin{align*}
        \|L\xi\|_{\mathcal D}^2 &= \mathbb{E} \left[ \sup_{0 \leq t \leq T} \left| L\xi_t \right|^2 \right] \\
        &\leq 2 \cdot \mathbb{E} \left[ \sup_{0 \leq t \leq T} \left| \int_0^t \int_s^T A_{r,s,t} \mathbb{E} \left[ \xi_r | \mathcal{F}_r \right] dr ds \right|^2 \right] 
         + 2 \cdot \mathbb{E} \left[ \sup_{0 \leq t \leq T} \left| \int_t^T B_{s,t} \mathbb{E} \left[ \xi_s | \mathcal{F}_t \right] ds \right|^2 \right] \\
        &\leq 2\|A\|_{C^0}^2 \cdot \mathbb{E} \left[ \sup_{0 \leq t \leq T} \left( \int_0^t \int_s^T \mathbb{E} \left[ \left| \xi_r \right| \Big| \mathcal{F}_s \right] dr ds \right)^2 \right] 
         + 2\|B\|_{C^0} \cdot \mathbb{E} \left[ \sup_{0 \leq t \leq T} \left( \int_t^T \mathbb{E} \left[ \left| \xi_s \right| \Big| \mathcal{F}_t \right] ds \right)^2 \right] \\
        &\leq 2\|A\|_{C^0([0,T]^3)}^2 \cdot \frac{T^4}{4} \cdot \mathbb{E} \left[ \sup_{0 \leq t \leq T} \mathbb{E} \left[ \sup_{0 \leq s \leq T} \left| \xi_s \right| \Big| \mathcal{F}_t \right]^2 \right] 
         + 2\|B\|_{C^0} \cdot T^2 \cdot \mathbb{E} \left[ \sup_{0 \leq t \leq T} \mathbb{E} \left[ \sup_{0 \leq s \leq T} \left| \xi_s \right| \Big| \mathcal{F}_t \right]^2 \right] \\
        &\leq \left( \frac{a^2 \vee (\eta\phi)}{4\eta^4} \cdot T^4 + \frac{1}{\eta^2} \cdot T^2 \right) \cdot \exp\left(4T\frac{a \vee \sqrt{\eta\phi}}{\eta}\right) \cdot \|\xi\|_{\mathcal{D}}^2.
    \end{align*}}
    Thus, giving the best bound for the operator $L$:
    \begin{equation}\label{bound_of_L}
        \|L\|_{B(\mathcal{D})} \leq \exp\left(2T\frac{a \vee \sqrt{\eta\phi}}{\eta}\right) \cdot \sqrt{\frac{a^2 \vee (\eta\phi)}{4\eta^4} \cdot T^4 + \frac{1}{\eta^2} \cdot T^2}.
    \end{equation}
    Similarly, recall the definition of $\hat{L}$ in formula \eqref{defhatL}. We have
    \begin{equation*}
        \sup_{0 \leq s,t \leq T} \left| \hat{L}_1 \left[E,F\right]_{s,t} \right| \leq \exp\left(2T\frac{a \vee \sqrt{\eta\phi}}{\eta}\right) \cdot \left( \frac{a \vee \sqrt{\eta\phi}}{2\eta^2} \cdot T^2 \cdot \|E\|_{C^0} + \frac{a \vee \sqrt{\eta\phi}}{2\eta^2} \cdot T \cdot \|F\|_{C^0} + \frac{1}{2\eta} \cdot T \cdot \|E\|_{C^0}\right),
    \end{equation*}
    and
    \begin{equation*}
        \sup_{0 \leq t \leq T} \left| \hat{L}_2 \left[E,F\right]_{t} \right| \leq \exp\left(2T\frac{a \vee \sqrt{\eta\phi}}{\eta}\right) \cdot \left( \frac 1{4\eta} \cdot T^2 \cdot \|E\|_{C^0} + \frac 1{2\eta} \cdot T \cdot \|F\|_{C^0}\right).
    \end{equation*}
    Thus, giving the best bound for the operator $L$:
    \begin{equation}\label{bound_of_hat_L}
        \|\hat{L}\|_{B(\mathcal{C})} \leq \exp\left(2T\frac{a \vee \sqrt{\eta\phi}}{\eta}\right) \cdot \left( \frac 1{4\eta} T^2 + \frac 1{2\eta} T \right) \vee \left( \frac{a \vee \sqrt{\eta\phi}}{2\eta^2} (T^2+T) + \frac 1{2\eta} T \right).
    \end{equation}
    In total, we can derive the best bound for $b$, 
    {\small
    $$
    0 < b < \exp\left( - 2T\frac{a \vee \sqrt{\eta\phi}}{\eta}\right) \cdot \left(\frac{a^2 \vee (\eta\phi)}{4\eta^4} \cdot T^4 + \frac{1}{\eta^2} \cdot T^2\right)^{-\frac 12} \wedge \left( \frac 1{4\eta} T^2 + \frac 1{2\eta} T \right)^{-1} \wedge \left( \frac{a \vee \sqrt{\eta\phi}}{2\eta^2} (T^2+T) + \frac 1{2\eta} T \right)^{-1}
    $$}
    which, by taking a convenient version of it, gives \eqref{bound of b}.
\end{proof}

\subsection{Optimal Control of the Informed Broker}\label{Optimal_Control_Broker_Mean_Field_Setting}

The problem of the broker consists in maximizing over $\nu^B$ (adapted to $(\mathcal{F}_t)$) and then over $(\mathcal{F}_t)$ (or equivalently over $\mu_\cdot$) the quantity:
\begin{align*}
&\mathbb{E} \left[ \int_0^T \left\{ Q^B_t \left(b\,\bar{\nu}_t  + \mu \right) + \eta \int_{\mathbb{R}} x^2 m_t(\d x) - \eta^B \left( \nu^B_t \right)^2 - 2\,a^B Q^B_t \left( \nu^B_t - \int_{\mathbb{R}} x\, m_t(\d x) \right) - \phi^B \left( Q^B_t \right)^2 \right\} \d t \right] \\
&= \mathbb{E} \left[ \int_0^T \left\{ Q^B_t \left(b\,\bar{\nu}_t  + \mu_t \right) + \eta \mathbb{E}[\hat{\nu}_t^2 | \mathcal{F}_t] - \eta^B \left( \nu^B_t \right)^2 - 2\,a^B Q^B_t \left( \nu^B_t - \bar{\nu}_t \right) - \phi^B \left( Q^B_t \right)^2 \right\} \d t \right],
\end{align*}
where $Q_0^B$ is the deterministic initial inventory of the broker, and
$$
\d Q^B_t = \left( \nu^B_t - \bar{\nu}_t \right) \d t, \qquad Q^B_0 = Q^B_0.
$$
In the first step, we fix the filtration $(\mathcal{F}_t)$ (and thus the process $(\mu_\cdot)$), and in view of Lemma \ref{lem.hatnu}, the process $(\hat{\nu})$ is optimized with respect to the control $\nu^B$. The value function of the broker is then given by 
$$
u^B(t,q) = \alpha^B_t + \beta^B_t q + \gamma^B_t q^2 / 2,
$$
which satisfies a backward Hamilton-Jacobi equation of the form
$$
\begin{array}{l}
\d u^B_t = H^B_t(q,Du^B_t)\d t + \d M^B_t(q) \qquad \text{in } (0,T) \times \mathbb{R},\\
u^B_T(q) = 0 \qquad \text{in } \mathbb{R},
\end{array}
$$
where
\begin{align*}
H^B_t(q,\xi) &= \min_{\nu^B \in \mathbb{R}} \left\{ - (\nu^B - \bar{\nu}_t) \xi - \left( q \left(b\,\bar{\nu}_t  + \mu_t \right) + \eta \mathbb{E}[\hat{\nu}_t^2 | \mathcal{F}_t] - \eta^B \left( \nu^B \right)^2 - 2\,a^B q \left( \nu^B - \bar{\nu}_t \right) - \phi^B q^2 \right) \right\} \\
&= - q \left(b\,\bar{\nu}_t  + \mu_t \right) - \eta \mathbb{E}[\hat{\nu}_t^2 | \mathcal{F}_t] - 2\,a^B q \bar{\nu}_t + \phi^B q^2 - \frac{| \xi - 2\,a^B q |^2}{4\eta^B} + \bar{\nu}_t \xi,
\end{align*}
which is of the form \eqref{notgenH} with 
$$
A_t = - \eta \mathbb{E}[\hat{\nu}_t^2 | \mathcal{F}_t], \quad B_t = - \left(b\,\bar{\nu}_t  + \mu_t \right) - 2\,a^B \bar{\nu}_t, \quad C_t = 2 \left(\phi^B - \frac{(a^B)^2}{\eta^B}\right), \quad D_t = \bar{\nu}_t, \quad E_t = \frac{a^B}{\eta^B}, \quad F_t = - \frac{1}{2\eta^B}.
$$
This yields the system
$$
\left\{
\begin{array}{ll}
(i) & \d \gamma^B_t = \left( 2 \left( \phi^B - \frac{(a^B)^2}{\eta^B} \right) + 2\frac{a^B}{\eta^B}\gamma^B_t - \frac{1}{2\eta^B} (\gamma^B_t)^2 \right) \d t + \d Z^B_{\gamma,t}, \qquad \gamma^B_T = 0, \\ [1ex]
(ii) & \d \beta^B_t = \left( - \left( b + 2 a^B -\gamma^B_t \right) \bar{\nu}_t  - {\mu}_t + \frac{2a^B - \gamma^B_t}{2\eta^B} \beta^B_t \right) \d t + \d Z^B_{\beta,t}, \qquad \beta^B_T = 0, \\ [1ex]
(iii) & \d \alpha^B_t = \left( - \eta \mathbb{E}[\hat{\nu}_t^2 | \mathcal{F}_t] + \bar{\nu}_t \beta^B_t - \frac{1}{4\eta^B} (\beta^B_t)^2 \right) \d t + \d Z^B_{\alpha,t}, \qquad \alpha^B_T = 0.
\end{array}
\right.
$$
Similarly, the equation for $\gamma^B$ has a unique (deterministic) analytic solution defined on $[0,T]$.
By taking $\Gamma^B_{s,t}$ to be $\exp(\int_s^t (\gamma^B_r - 2a^B) / (2\eta^B) \d r)$, and recalling Lemma \ref{lem.hatnu}, we can write $\beta^B$ as follows:
\begin{equation}\label{defbetaB}
\begin{aligned}
\beta^B_t &= \int_t^T (b + 2a^B - \gamma^B_s) \cdot \Gamma^B_{s,t} \cdot \mathbb{E}[\bar{\nu}_s | \mathcal{F}_t] \d s + \mu_t \int_t^T \Gamma^B_{s,t} \d s \\
          &= \int_t^T (b + 2a^B - \gamma^B_s) \cdot \Gamma^B_{s,t} \cdot \left( (\bar{A}_s + \bar{B}_s) \bar{Q}_0 + \int_0^t \bar{C}_{r,s} \mu_r \d r + \mu_t \int_t^s \bar{C}_{r,s} \d r + D_s \mu_t \right) \d s + \mu_t \int_t^T \Gamma^B_{s,t} \d s \\
          &= \bar{A}_t^B \cdot \bar{Q}_0 + \int_0^t \bar{C}_{s,t}^B \cdot \mu_s \d s + \bar{D}_t^B \cdot \mu_t,
\end{aligned}
\end{equation}
where
\begin{equation}\label{barAB_barCB_barDB_definition}
\left\{
\begin{array}{ll}
    \bar{A}_t^B     & := \int_t^T (b + 2a^B - \gamma^B_s) \cdot \Gamma^B_{s,t} \cdot (\bar{A}_s + \bar{B}_s) \d s, \\
    \bar{C}_{s,t}^B & := \int_t^T (b + 2a^B - \gamma^B_r) \cdot \Gamma^B_{r,t} \cdot \bar{C}_{s,r} \d r, \\
    \bar{D}_t^B     & := \int_t^T (b + 2a^B - \gamma^B_s) \cdot \Gamma^B_{s,t} \cdot \left( \int_t^s \bar{C}_{r,s} \d r + \bar{D}_s \right) \d s + \int_t^T \Gamma^B_{s,t} \d s.
\end{array}
\right.
\end{equation}
Note that $\bar A^B$, $\bar C^B$, and $\bar D^B$ are analytic functions, depending only on $a$, $\eta$, $\phi$, $b$, $a^B$, $\eta^B$, $\phi^B$, $Q_0^B$, $\mathbb E[\mu]$, $\mathbb E[\mu^2]$, $\mathbb E[Q_0]$, $\mathbb E[Q_0^2]$, and $T$. 
Finally, we also have for $\alpha^B$ that
$$
\alpha^B_t = -\E\left[ \int_t^T (- \eta \E[ \hat \nu_s^2\ |\ \mathcal F_s] +\bar\nu_s \beta^B_s -  \frac{1}{4\eta^B} (\beta^B_s)^2)ds\ |\ \mathcal F_t\right]
= \E\left[ \int_t^T ( \eta \hat \nu_s^2 - \bar\nu_s \beta^B_s +  \frac{1}{4\eta^B} (\beta^B_s)^2)ds\ |\ \mathcal F_t\right].
$$
Now, solving the optimization problem in Definition \eqref{majorsol} is equivalent to solving the following problem:\\

We try to find the filtration $(\mathcal{F}_t)_{0\leq t \leq T}$ to guarantee the following four properties:
\begin{enumerate}
    \item $\mathcal{F}_t = \sigma(s \to \hat \nu_s^B ; 0\leq s\leq t)$, $\mathbb{E}\left[ \mu  | \mathcal{F}_0 \right] = \mu_0 = \mathbb{E}[\mu]$.
    \item $
    \left\{
    \begin{array}{rl}
        \alpha_t^B = & \mathbb{E}\left[ \int_t^T \eta \hat \nu_s^2 - \bar\nu_s \beta^B_s + \frac{1}{4\eta^B} (\beta^B_s)^2 d s \bigg| \mathcal{F}_t \right], \\
        \beta_t^B = &  \bar A^B_t + \int_0^t \bar C_{s,t}^B \cdot \mathbb{E}\left[ \mu |\mathcal{F}_s \right] d s + \bar D_t^B \cdot \mathbb{E}\left[ \mu | \mathcal{F}_t\right].
    \end{array}
    \right.
    $
    \item $
    \left\{
    \begin{array}{rl}
        \hat\nu_t^B = & \frac{1}{2\eta^B} \left( (\gamma_t^B - 2 a^B) \hat Q_t^B + \beta_t^B \right), \\
        \hat Q_t^B = &  Q_0^B + \int_0^t (\hat \nu_s^B - \bar \nu_s) d s.
    \end{array}
    \right.
    $
    \item $\underset{\nu^B\in \mathcal{A}^B}{\sup}H^{B,m^{\nu^B}}(\nu^B) = \alpha^B_0 +\beta^B_0 Q_0^B+\gamma^B_0(Q_0^B)^2/2 .$
\end{enumerate}

We can first try to optimize $ u^B(0, Q_0^B) = \alpha^B_0 + \beta^B_0 Q_0^B + \gamma^B_0 (Q_0^B)^2 / 2 $ with respect to $(\mathcal{F}_t)$ and check whether the solution satisfies the four properties. Note that $\beta_0^B$ and $\gamma_0^R$ only depend on $\mu_0$, so we only need to optimize $\alpha_0^B$, that is, we want to compute
\begin{align*}
    \underset{(\mathcal{F}_t)~\text{filtration}}{\sup} & \mathbb{E} \left[ \int_0^T \left\{ \eta \left( \bar{A}_s \bar Q_0 + \bar{B}_s Q_0 + \int_0^s \bar{C}_{r,s} \mathbb{E}[\mu | \mathcal{F}_r] dr + \mathbb{E}[\mu | \mathcal{F}_s] \bar{D}_s \right)^2 \right. \right. \\
    & - \left( \bar{A}_s \bar Q_0 + \bar{B}_s \bar Q_0 + \int_0^s \bar{C}_{r,s} \mathbb{E}[\mu | \mathcal{F}_r] dr + \mathbb{E}[\mu | \mathcal{F}_s] \bar{D}_s \right) \left( \bar{A}_s^B \bar Q_0 + \int_0^s \bar{C}_{r,s}^B \mathbb{E}[\mu | \mathcal{F}_r] dr + \mathbb{E}[\mu | \mathcal{F}_s] \bar{D}_s^B \right) \\
    & \qquad \qquad \qquad \qquad \qquad \qquad \qquad \qquad \qquad \qquad \qquad \left. \left. + \frac{1}{4\eta^B} \left( \bar{A}_s^B \bar Q_0 + \int_0^s \bar{C}_{r,s}^B \mathbb{E}[\mu | \mathcal{F}_r] dr + \mathbb{E}[\mu | \mathcal{F}_s] \bar{D}_s^B \right)^2 \right\} ds \right].
\end{align*}
By taking out a constant depending only on $a$, $\eta$, $\phi$, $b$, $a^B$, $\eta^B$, $\phi^B$, $Q_0^B$, $\mathbb{E}[\mu]$, $\mathbb{E}[\mu^2]$, $\mathbb{E}[Q_0]$, $\mathbb{E}[Q_0^2]$, and $T$, we finally obtain the following problem,
\begin{align*}
    \underset{\substack{(\mathcal{F}_t)~\text{filtration} \\ \text{s.t.} \, \mathbb{E} [\mu | \mathcal{F}_s] = \mu_s}}{\sup} \, & \mathbb{E} \left[ \int_0^T \left\{ \left[ \eta (\bar{D}_s)^2 - \bar{D}_s \bar{D}_s^B + \frac{1}{4\eta^B} (\bar{D}_s^B)^2 \right] \mu_s^2 - \bar{D}_s \mu_s \left( \int_0^s \bar{C}^B_{r,s} \mu_r \, \d r \right) - \bar{D}_s^B \mu_s \left( \int_0^s \bar{C}_{r,s} \mu_r \, \d r \right) \right. \right. \\
    & \qquad \qquad \qquad \qquad + 2\eta \bar{D}_s \mu_s \left( \int_0^s \bar{C}_{r,s} \mu_r \, \d r \right) + \frac{1}{2 \eta^B} \bar{D}_s^B \mu_s \left( \int_0^s \bar{C}^B_{r,s} \mu_r \, \d r \right) \\
    & \qquad \qquad \left. \left. + \eta \left( \int_0^s \bar{C}_{r,s} \mu_r \, \d r \right)^2 - \left( \int_0^s \bar{C}_{r,s} \mu_r \, \d r \right) \left( \int_0^s \bar{C}^B_{r,s} \mu_r \, \d r \right) + \frac{1}{4\eta^B} \left( \int_0^s \bar{C}_{r,s}^B \mu_r \, \d r \right)^2 \right\} \, \d s \right] .
\end{align*}
Note that for any integral of the form 
\[
\mathbb{E}\left[ \int_0^T \left( \int_0^s a_{r,s} \mu_r \, dr \right) \left( \int_0^s b_{r,s} \mu_r \, dr \right) ds \right],
\]
we have it to be equal to
\begin{align*}
      & \mathbb{E} \left[ \int_0^T \left( \int_0^s \int_0^s a_{r,s} b_{u,s} \mu_r \mu_u \, dr \, du \right) ds \right] \\
     =& \mathbb{E}\left[ \int_0^T \left( \int_0^s \int_u^s a_{r,s} b_{u,s} \mu_r \mu_u \, dr \, du +  \int_0^s \int_r^s a_{r,s} b_{u,s} \mu_r \mu_u \, du \, dr \right) ds \right] \\
     =& \mathbb{E} \left[ \int_0^T \mathbb{E}\left[ \int_0^s \mathbb{E}\left[ \int_u^s \left( a_{r,s} b_{u,s} + b_{r,s} a_{u,s} \right) \mu_r \mu_u \, dr \ \bigg| \ \mathcal{F}_u \right] \, du \right] ds \right] \\
     =& \int_0^T \int_0^s \left( \int_u^s a_{r,s} b_{u,s} + b_{r,s} a_{u,s} \, dr \right) \cdot \mathbb{E} [\mu_u^2] \, du \, ds \\ 
     =& \int_0^T \left( \int_u^T \int_u^s a_{r,s} b_{u,s} + b_{r,s} a_{u,s} \, dr \, ds \right) \cdot \mathbb{E} [\mu_u^2] \, du .
\end{align*}
Similarly, we have that 
\[
\mathbb{E}\left[ \int_0^T  a_s \mu_s \left( \int_0^s b_{r,s} \mu_r \, dr \right) ds \right] = \int_0^T \left( \int_r^T a_s b_{r,s} \, ds \right) \cdot \mathbb{E}[\mu_r^2] \, dr.
\]
Hence we can find a function $\mathcal{A} \in C^\omega([0,T])$ such that the integral we want to optimize can be transformed into the following form,
\[
\int_0^T \mathcal{A}'_s \cdot \mathbb{E}[\mu_u^2] \, du = - \mathbb{E}[\mu]^2 \cdot \mathcal{A}_0 - \int_0^T \mathcal{A}_s \, d \mathbb{E}[\mu_s^2],
\]
where $\mathcal{A}'$ is defined as follows:
\begin{equation}\label{A_derivative_definition}
    \begin{aligned}
        \mathcal{A}'_s = &~~ \eta~(\bar{D}_s)^2 - \bar{D}_s \bar{D}_s^B + \frac{1}{4\eta^B}~(\bar{D}_s^B)^2 \\
                      & + 2\eta~\int_s^T \bar{D}_r \bar{C}_{s,r} \, dr - \int_s^T (\bar{D}_r \bar{C}_{s,r}^B + \bar{D}_r^B \bar{C}_{s,r}) \, dr + \frac{1}{2\eta^B}~\int_s^T \bar{D}_r^B \bar{C}_{s,r}^B \, dr\\
                      & + 2\eta~ \int_s^T \int_s^u \bar{C}_{r,u} \bar{C}_{s,u} \, dr \, du - \int_s^T \int_s^u (\bar{C}_{r,u} \bar{C}^B_{s,u} + \bar{C}^B_{r,u} \bar{C}_{s,u}) \, dr \, du + \frac{1}{2\eta^B}~ \int_s^T \int_s^u \bar{C}^B_{r,u} \bar{C}^B_{s,u} \, dr \, du.
    \end{aligned}
\end{equation}
To optimize this integral, we can in fact take,
$$
\mathbb{E}[\mu_t^2] = \left\{ 
\begin{array}{cc}
    \mathbb{E}[\mu]^2 & \text{, if~~} t \leq t_c, \\ 
    \mathbb{E}[\mu^2] & \text{, if~~} t > t_c,
\end{array} 
\right.
$$
where the critical time $t_c$ is determined by the negative minimum of $\mathcal{A}$ as follows:
$$
t_c \in \left\{
\begin{array}{ll}
    \underset{t \in [0,T]}{\operatorname{argmin}} ~ \mathcal{A}_t \wedge 0 & \text{, if~} \exists ~t, \mathcal{A}_t < 0,  \\ 
    (T,\infty) & \text{, if~} \forall ~t, \mathcal{A}_t \geq 0,
\end{array}
\right.
$$
and hence we can take the optimal filtration to be (it is not unique),
$$
\mathcal{F}_t = \left\{ 
\begin{array}{cc}
    \{\emptyset,\Omega\} & \text{, if~~} t \leq t_c, \\ 
    \sigma(\mu) & \text{, if~~} t > t_c.
\end{array} 
\right.
$$
Finally, it is easy to see that such a filtration indeed satisfies properties 1-4 proposed above, which gives us the following theorem.

\begin{theorem}\label{best_control_for_broker}
    Assuming $b>0$ is small enough, there exists $(\hat{A}^B, \hat{B}^B, \hat{C}^B, \hat{D}^B) \in C^{\omega}([0,T] \times [0,T], \mathbb{R}^4)$ such that the optimal control for the broker $\hat{\nu}^B = \hat{\nu}^B(Q_0^B)$ is given by 
    \begin{equation}\label{hatnuB_solution}
        \left\{
        \begin{array}{ll}
            \hat{\nu}^B_t = \hat{A}^B_t \cdot \bar{Q}_0 + \hat{B}^B_t \cdot Q_0^B + \left(\int_0^t \hat{C}^B_{s,t} \, ds + \hat{D}^B_t \right) \cdot \mathbb{E}[\mu] & \text{, when~} t \leq t_c, \\
            \hat{\nu}^B_t = \hat{A}^B_t \cdot \bar{Q}_0 + \hat{B}^B_t \cdot Q_0^B + \int_0^{t_c} \hat{C}^B_{s,t} \, ds \cdot \mathbb{E}[\mu] + \left( \int_{t_c}^t \hat{C}^B_{s,t} \, ds + \hat{D}^B_t \right) \cdot \mu & \text{, when~} t > t_c.
        \end{array}
        \right.
    \end{equation}
    where the critical time $t_c$ is determined by the negative minimum of $\mathcal{A}$ with its derivative defined as in equation \eqref{A_derivative_definition}.
    Note that since $(\bar{C}^B, \bar{D}^B) \in C^{\omega}([0,T] \times [0,T], \mathbb{R}^4)$, we indeed have
    \begin{equation*}
        \left\{
        \begin{array}{lll}
            \sigma(\hat{\nu}^B_s; 0 \leq s \leq t) &= \{\emptyset,\Omega\} & \text{, if~~} t \leq t_c, \\ 
            \sigma(\hat{\nu}^B_s; 0 \leq s \leq t) &= \sigma(\mu) & \text{, if~~} t > t_c.
        \end{array}
        \right.
    \end{equation*}
\end{theorem}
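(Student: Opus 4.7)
The plan is to treat the broker's problem in two layers, as the excerpt already suggests: first fix an admissible filtration $(\mathcal{F}_t)$, solve the resulting linear–quadratic problem with random coefficients using the machinery of Subsection \ref{general_HJ_equation}, and then optimize over the filtration. Fixing $(\mathcal{F}_t)$, the value function has the affine-quadratic form $u^B(t,q) = \alpha^B_t + \beta^B_t q + \tfrac{1}{2}\gamma^B_t q^2$, with $\gamma^B$ solving the deterministic Riccati equation (hence analytic on $[0,T]$), and $\beta^B$ given by the backward linear representation \eqref{defbetaB}, which is explicit because $\bar{\nu}$ is itself linear in $\mu_\cdot$ by Lemma \ref{lem.hatnu}. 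Since $\beta^B_0$ and $\gamma^B_0$ depend on $(\mathcal{F}_t)$ only through $\mu_0 = \mathbb{E}[\mu]$, the remaining degree of freedom is the choice of filtration used to maximize $\alpha^B_0 = \mathbb{E}\bigl[\int_0^T(\eta\mathbb{E}[\hat{\nu}_s^2|\mathcal{F}_s] - \bar{\nu}_s\beta^B_s + \tfrac{1}{4\eta^B}(\beta^B_s)^2)\,ds\bigr]$.

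The second step is to rewrite this outer optimization as a problem on the path $t \mapsto \mathbb{E}[\mu_t^2]$. Substituting the expressions for $\hat{\nu}_s$, $\bar{\nu}_s$ (from \eqref{hatnu_solution} and \eqref{first_second_conditional_moment_of_hatnu}) and $\beta^B_s$ (from \eqref{defbetaB}) and expanding the square, $\alpha^B_0$ splits into constants (involving moments of $Q_0$ and $\mathbb{E}[\mu]$), terms linear in $\mu_s$ whose expectations collapse to constants via $\mathbb{E}[\mu_s]=\mathbb{E}[\mu]$, and a purely quadratic functional in $\mu_\cdot$. Using Fubini and the martingale identity $\mathbb{E}[\mu_r\mu_u]=\mathbb{E}[\mu_{r\wedge u}^2]$, each quadratic term of the form $\mathbb{E}\int_0^T(\int_0^s a_{r,s}\mu_r\,dr)(\int_0^s b_{r,s}\mu_r\,dr)\,ds$ or $\mathbb{E}\int_0^T a_s\mu_s(\int_0^s b_{r,s}\mu_r\,dr)\,ds$ reduces, by the computation sketched just before \eqref{A_derivative_definition}, to $\int_0^T f(u)\mathbb{E}[\mu_u^2]\,du$ for a deterministic analytic kernel $f$; collecting terms identifies $f$ with $\mathcal{A}'_s$ of \eqref{A_derivative_definition}.

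Next, integrating by parts with $\mathcal{A}_T=0$ and $\mathbb{E}[\mu_0^2]=\mathbb{E}[\mu]^2$ gives
\begin{equation*}
\int_0^T \mathcal{A}'_s\,\mathbb{E}[\mu_s^2]\,ds = -\mathbb{E}[\mu]^2\,\mathcal{A}_0 - \int_0^T \mathcal{A}_s\,d\mathbb{E}[\mu_s^2].
\end{equation*}
Since $(\mu_t)$ is a square-integrable martingale with terminal variable dominated by $\mu$, the map $t\mapsto\mathbb{E}[\mu_t^2]$ is non-decreasing on $[0,T]$ with values in $[\mathbb{E}[\mu]^2, \mathbb{E}[\mu^2]]$, and conversely any non-decreasing function with these endpoints can be realized by an appropriate disclosure filtration (in particular, the single-jump profile corresponds to a deterministic time after which $\mu$ is fully revealed). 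The extremal problem $\sup_{(\mathcal{F}_t)}(-\int_0^T\mathcal{A}_s\,d\mathbb{E}[\mu_s^2])$ over the monotone measures with total mass bounded by $\mathbb{E}[\mu^2]-\mathbb{E}[\mu]^2$ is then attained by placing all the mass at a minimizer $t_c$ of $\mathcal{A}$ on $[0,T]$ when $\min\mathcal{A}<0$, and by placing zero mass (no disclosure) otherwise.

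The final step is to take $\mathcal{F}_t = \{\emptyset,\Omega\}$ for $t\leq t_c$ and $\mathcal{F}_t=\sigma(\mu)$ for $t>t_c$, so that $\mu_t=\mathbb{E}[\mu]$ on $[0,t_c]$ and $\mu_t=\mu$ on $(t_c,T]$, then plug these expressions into \eqref{defbetaB}, into the feedback $\hat{\nu}^B_t = \frac{1}{2\eta^B}((\gamma^B_t-2a^B)\hat{Q}^B_t + \beta^B_t)$, and into the resulting linear ODE for $\hat{Q}^B_t$. Resolving this ODE piecewise on $[0,t_c]$ and on $(t_c,T]$ produces the claimed representation \eqref{hatnuB_solution} with analytic coefficients $\hat{A}^B,\hat{B}^B,\hat{C}^B,\hat{D}^B$, and self-consistency (properties 1--4 in Section \ref{Optimal_Control_Broker_Mean_Field_Setting}) follows because $\hat{\nu}^B$ on $[0,t_c]$ is a deterministic function of time while on $(t_c,T]$ it depends linearly and non-degenerately on $\mu$. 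The main obstacle I anticipate is the algebraic bookkeeping that produces the explicit kernel $\mathcal{A}'$, together with justifying in our setting that the outer supremum is genuinely attained by a pure-jump profile in the class of realizable $\mathbb{E}[\mu_t^2]$ (convexity/linearity of the functional in this scalar variable, and the ability to enlarge the probability space as in Remark \ref{enlarge the space} to realize the one-shot disclosure when $\mu$ is discrete, resolve both points).
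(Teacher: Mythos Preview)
Your proposal is correct and follows essentially the same approach as the paper: fix the filtration and solve the broker's LQ problem via the backward HJ machinery to obtain $u^B(0,Q_0^B)=\alpha^B_0+\beta^B_0 Q_0^B+\tfrac{1}{2}\gamma^B_0 (Q_0^B)^2$, observe that only $\alpha^B_0$ depends nontrivially on $(\mathcal{F}_t)$, reduce the quadratic functional in $\mu_\cdot$ to $\int_0^T\mathcal{A}'_s\,\mathbb{E}[\mu_s^2]\,ds$ via the martingale identity $\mathbb{E}[\mu_r\mu_u]=\mathbb{E}[\mu_{r\wedge u}^2]$, integrate by parts with the normalization $\mathcal{A}_T=0$, and conclude that the single-jump disclosure at $t_c\in\operatorname{argmin}\mathcal{A}$ is optimal. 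Your additional remarks on realizability of the jump profile (via Remark~\ref{enlarge the space}) and on checking self-consistency of the feedback against properties 1--4 are exactly the points the paper handles implicitly.
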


\begin{remark}
    We give in section \ref{nontrivial} an example in which the critical time is nontrivial, i.e., neither equal to $0$, nor equal to $T$.
\end{remark}

\begin{remark}
The expressions for $(\hat A^B, \hat B^B, \hat C^B, \hat D^B)$ in Theorem \ref{best_control_for_broker} are given by the following formulas:
\[
\left\{
\begin{aligned}
\hat A^B_t & = \frac{1}{2\eta^B} \bar A^B_t + \int_0^t \frac{\gamma_t^B-2a^B}{2\eta^B} \Gamma_{s,t}^B \left( \frac{1}{2\eta^B} \bar A^B_s - \bar A_s - \bar B_s \right) ds,\\
\hat B^B_t & = \frac{\gamma_t^B-2a^B}{2\eta^B} \Gamma_{0,t}^B, \\
\hat C^B_{s,t} & = \frac{1}{2 \eta^B} \bar C_{s,t}^B + \frac{\gamma_t^B-2a^B}{2\eta^B} \left( \int_s^t \Gamma_{r,t}^B \left( \frac{1}{2\eta^B} \bar C_{s,r}^B - \bar C_{s,r} \right)dr + \Gamma_{s,t}^B (\frac{1}{2\eta^B} \bar D_s^B - \bar D_s ) \right), \\
\hat D^B_t & = \frac{1}{2 \eta^B} \bar D_{t}^B.
\end{aligned}
\right.
\]
\end{remark}

\section{Stackelberg Problem with Finitely Many Traders}\label{section_comments_finite}
In this section, we investigate the relationship between the problem in the mean field setting and those with finitely many traders, aiming to prove that the optimal control in the mean field setting is approximately optimal in the finite case. Let $N \in \mathbb{N}$ be a large number of traders. We consider Stackelberg equilibria of a differential game in which the $N$ traders interact with the broker. 
Let the probability space $\left(\Omega, \mathcal{F}, \mathbb{P}\right)$ support a family of i.i.d. initial conditions of the traders' inventories $(Q_0^{N,j})_{1 \leq j \leq N}$ following distribution $m_0$ for any $N \geq 1$.

\subsection{Nash Equilibrium of the Traders}
As in Section \ref{section_mfg_equilibrium_for_traders}, in this section, the broker's execution rate \( \nu^B \) is fixed. A Nash equilibrium $\nu^N := (\nu^{N,j})_{1 \leq j \leq N}$ associated with the costs $(H^{N,j}[\nu^B](\cdot))_{1 \leq j \leq N}$ is reached by the traders if
\begin{equation*} 
    H^{N,j}\left[\nu^B\right]\left(\nu^N\right) = \underset{\nu \in \mathcal{A}^{N,j}}{\sup} H^{N,j}\left[\nu^B\right]\left(\nu,(\nu^{N,k})_{k \neq j}\right)
\end{equation*}
for $1 \leq j \leq N$, where
\begin{equation}\label{trader_obj_finite}
    H^{N,j}\left[\nu^B\right]\left(\nu^N\right) = \mathbb{E} \left[ \int_0^T L(Q_t^{N,j}, \nu_t^{N,j}) + F(Q_t^{N,j}, \mathcal{E}_{\nu_t^N}) \, \mathrm{d}t \right].
\end{equation}
with functions $L$, $F$, and empirical measure $\mathcal{E}_{\nu^N}$ defined as follows:
\[
    L(Q, \nu) = - \eta \cdot \nu^2 - 2a \cdot Q \nu - \phi \cdot Q^2,~~
    F(Q, m) = Q \cdot \left(b \int_{\mathbb{R}} \nu \, m(\mathrm{d}\nu) + \mu \right),~~
    \mathcal{E}_{\nu_t^N} = \frac{1}{N} \sum_{j=1}^N \delta_{\nu_t^{N,j}}.
\]
We should also note that we choose the control from the admissible control set $\mathcal{A}^{N,j}$:
\[
    \mathcal{A}^{N,j} = \left\{ \nu \in \mathbf{L}^2([0,T] \times \Omega): \nu_t \text{~is~} \mathcal{F}_t^{N,j}\text{-progressively measurable} \right\},
\]
where $\mathcal{F}_t^{N,j}$ is the filtration generated by the initial condition of the inventory process of the trader $Q_0^{N,j}$ and the execution rate $(\nu^B_t)_{t \in [0,T]}$ of the broker, 
showing that traders should decide their strategies based on the information of their past controls and the initial inventories. We also define $\mathcal{F}_t$ to be $\sigma\left(\nu^B_s : 0 \leq s \leq t\right)$ for convenience.

\subsection{Stackelberg Equilibrium of the Informed Broker}\label{section: broker_finite_problem}
A Stackelberg equilibrium should be attained for the informed broker, that is:
\begin{equation}\label{lem.finite_problem_for_broker}
    \underset{\nu^B \in \mathcal{A}^{N,B}}{\sup} ~\underset{\nu^N \text{~Nash equilibrium}}{\inf} H^{N,B}\left[\mathcal{E}_{\nu^N}\right] \left(\nu^B\right),
\end{equation}
where
\begin{equation}\label{broker_obj_finite}
    H^{N,B}\left[\mathcal{E}_{\nu^N}\right]\left(\nu^B\right) = \mathbb{E} \left[ \int_0^T L^B(Q^B_t[\mathcal{E}_{\nu^N}, \nu^B], \nu^B_t) + F^B(Q^B_t[\mathcal{E}_{\nu^N}, \nu^B], \mathcal{E}_{\nu_t^N}) \, \mathrm{d}t \right].
\end{equation}
with functions $L^B$ and $F^B$ defined as follows:
\[
    L^B(Q, \nu) = - \eta^B \cdot \nu^2 - 2a^B \cdot Q \nu - \phi^B \cdot Q^2,~~
    F^B(Q, m) = Q \cdot \left( (b + 2a^B) \int_{\mathbb{R}} \nu \, m(\mathrm{d} \nu) + \mu\right) + \eta \cdot \int_{\mathbb{R}} \nu^2 \, m(\mathrm{d} \nu).
\]
We also have the inventory process of the broker $Q^B_t[m,\nu^B]$ as:
\[
    Q^B_t[m,\nu^B] = Q_0^B + \int_0^t \left( \nu_s^B - \int_{\mathbb{R}} \nu \, m(\mathrm{d}\nu) \right) \, \mathrm{d}s.
\]
We should also note that we choose the control from the admissible control set $\mathcal{A}^{N,B}$:
\[
    \mathcal{A}^{N,B} = \left\{ \nu \in \mathbf{L}^2([0,T] \times \Omega): \nu_t \text{ is independ of }(Q_0^{N,j})_{1\leq j \leq N} \right\},
\]
and we write \( \mathcal{F}^{N,B} \) for the filtration generated by the control \( \nu^{N,B} \), chosen by the broker in the \( N \)-players game. The independence of the broker's control from the initial conditions of the traders arises from the same reason we have already stated in Section \ref{section_stackelberg_equilibrium_for_broker}.\\

\subsection{Statement of the Main Result}

\begin{theorem}\label{thm.finite_1}
Assume that $m_0 \in \mathcal{P}_4(\mathbb{R})$ and $b$ is small enough. Fix $\epsilon > 0$ and let $\hat{\nu}^B$ be defined as in equation \eqref{hatnuB_solution}, which is a maximizer for the problem in Definition \ref{majorsol}. 
Then there exists $C>0$ and $N_0 \in \mathbb{N}$ such that for all $N \geq N_0$, $\hat{\nu}^B$ is a $C/\sqrt{N}$-optimizer of problem \eqref{lem.finite_problem_for_broker},
$$\underset{\nu^B \in \mathcal{A}^{N,B}}{\sup} ~\underset{\nu^N \text{~Nash equilibrium}}{\inf} H^{N,B}\left[\mathcal{E}_{\nu^N}\right]\left(\nu^B\right),$$
where $C$ is a constant only related to $a$, $\eta$, $\phi$, $b$, $a^B$, $\eta^B$, $\phi^B$, $Q_0^B$, $\mathbb{E}[\mu]$, $\mathbb{E}[\mu^2]$, $\mathbb{E}[Q_0]$, $\mathbb{E}[Q_0^2]$, and $T$.
\end{theorem}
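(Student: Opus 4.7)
The plan is a sandwich argument around the mean field value $V^{\mathrm{MFG}} := \sup_{\nu^B \in \mathcal A^B} H^B[m^{\nu^B}](\nu^B)$, which by Theorem~\ref{best_control_for_broker} is attained at $\hat\nu^B$. I will prove the two inequalities
\begin{align*}
\inf_{\nu^N \text{ Nash}} H^{N,B}[\mathcal E_{\nu^N}](\hat\nu^B) &\geq V^{\mathrm{MFG}} - C/\sqrt N, \\
\sup_{\nu^B \in \mathcal A^{N,B}} \inf_{\nu^N \text{ Nash}} H^{N,B}[\mathcal E_{\nu^N}](\nu^B) &\leq V^{\mathrm{MFG}} + C/\sqrt N,
\end{align*}
whose combination makes $\hat\nu^B$ a $(2C/\sqrt N)$-optimizer of \eqref{lem.finite_problem_for_broker}. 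Both reduce to a single stability estimate of the form $|H^{N,B}[\mathcal E_{\nu^N}](\nu^B) - H^B[m^{\nu^B}](\nu^B)| \leq C/\sqrt N$ valid uniformly over $L^2$-bounded broker strategies. The restriction to bounded strategies is legitimate, since the penalty $-\eta^B(\nu^B_t)^2$ in the objective forces any sequence approaching the supremum to stay in a ball whose radius depends only on the problem data.

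The stability estimate rests on two ingredients. First, I characterize the Nash equilibrium of the $N$ traders given any fixed broker strategy $\nu^B$ with associated filtration $(\mathcal F_t)$. Since each trader's problem is linear-quadratic and coupled only through $\bar\nu^N_t = N^{-1}\sum_k \nu^{N,k}_t$, the HJ machinery of Subsection~\ref{general_HJ_equation} reduces the trader's response to the Riccati system \eqref{riccati_equation} with $\bar\nu$ replaced by $\bar\nu^N$. A finite-dimensional version of the fixed-point argument used after \eqref{defbarnu}, valid under the smallness assumption on $b$ of Remark~\ref{rmk on b}, yields a unique Nash equilibrium of the form
$$\nu^{N,j}_t = \bar A_t \cdot \mathbb E[Q_0] + \bar B_t \cdot Q_0^{N,j} + \int_0^t \bar C_{s,t}\, \mu_s\, ds + \bar D_t\, \mu_t,$$
with the same coefficients $\bar A, \bar B, \bar C, \bar D$ as in \eqref{barA_barB_barC_barD_definition}; the key point is that a finite-player trader estimates the average initial inventory of the others by their common mean $\mathbb E[Q_0]$, which coincides with $\bar Q_0$ in the mean field. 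Second, I quantify the propagation of chaos: because $\nu^{N,j}_t$ is affine in the i.i.d.\ variables $Q_0^{N,j}$ with common random coefficients, classical central limit estimates give
$$\mathbb E\left[\left(\int x\, \mathcal E_{\nu^N_t}(dx) - \int x\, m_t(dx)\right)^2\right] + \mathbb E\left[\left(\int x^2\, \mathcal E_{\nu^N_t}(dx) - \int x^2\, m_t(dx)\right)^2\right] \leq \frac{C}{N},$$
uniformly in $t \in [0,T]$; the fourth-moment assumption $m_0 \in \mathcal P_4(\mathbb R)$ is exactly what is needed to control the variance of the second empirical moment. Plugging these into \eqref{broker_obj_finite} and using Cauchy--Schwarz on the quadratic cross-terms involving $Q^B_t$ delivers the announced $O(1/\sqrt N)$ stability.

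The main obstacle is the first ingredient: producing a well-posed and sufficiently explicit Nash equilibrium under the same smallness condition on $b$ as in the mean field game. The MFG fixed-point argument in Subsection~\ref{Optimal_Control_Representative_Trader_Mean_Field_Setting} operated on adapted processes with a common filtration $(\mathcal F_t)$, while here the traders' filtrations $(\mathcal F^{N,j}_t)$ differ through the private initial conditions $Q_0^{N,j}$. One must either work on the enlarged product space with appropriate conditional expectations, or exploit symmetry to reduce to an effective common-filtration problem after conditioning on $(Q_0^{N,k})_{1 \le k \le N}$. Once this is handled, the remaining computations are a direct application of the operator bounds \eqref{bound_of_L} and \eqref{bound_of_hat_L}, and the final constant $C$ depends only on the parameters listed in the statement.
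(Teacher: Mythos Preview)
Your overall architecture---the sandwich around $V^{\mathrm{MFG}}$, the stability estimate $|H^{N,B}[\mathcal E_{\nu^N}](\nu^B)-H^B[m^{\nu^B}](\nu^B)|\le C/\sqrt N$ for $L^2$-bounded $\nu^B$, and the a-priori coercivity bound coming from $-\eta^B(\nu^B_t)^2$---matches the paper's proof (its steps (i)--(iv)) essentially line for line, and the use of Lemma~\ref{lem.finite_3}-type propagation of chaos based on $m_0\in\mathcal P_4(\mathbb R)$ is exactly the mechanism the paper employs.

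There is, however, one concrete inaccuracy in your first ingredient. The finite-player Nash equilibrium does \emph{not} have the same coefficients $(\bar A,\bar B,\bar C,\bar D)$ as the mean-field control. In the $N$-player game trader $j$'s own rate contributes $\tfrac{1}{N}\nu^{N,j}_t$ to the aggregate $\bar\nu^N_t$, producing a self-interaction term $\tfrac{b}{N}Q_t^{N,j}\nu^{N,j}_t$ in the running cost that is absent in the mean field. The paper absorbs this into a shifted parameter $a_N:=a-\tfrac{b}{2N}$ (proof of Lemma~\ref{lem.finite_2}), leading to perturbed deterministic coefficients $(\bar A^N,\bar B^N,\bar C^N,\bar D^N)$ with $\sup_{s,t}|\bar A^N_t-\bar A_t|+\cdots\le C/N$, see \eqref{estimate_for_ABCDN}. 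Your formula with the unperturbed coefficients is not a Nash equilibrium of the finite game, so as stated the infimum over Nash equilibria is not evaluated correctly. The fix is benign---track the $O(1/N)$ perturbation of the Riccati/operator data through the fixed-point argument, as the paper does---and once incorporated your propagation-of-chaos and stability estimates go through unchanged with the same $C/\sqrt N$ rate.
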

In order to prove the result above, we need the following two lemmas:

\begin{lemma}\label{lem.finite_2}
Under the assumptions of Theorem \ref{thm.finite_1}, there exists $N_0 \in \mathbb{N}$ such that for any admissible control of the broker $\nu^B \in \mathcal{A}^{N,B}$, and any $N \geq N_0$, there exists a unique Nash equilibrium associated with $(H^{N,j}[\nu^B](\cdot))_{1 \leq j \leq N}$.
\end{lemma}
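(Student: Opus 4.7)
The plan is to fix $\nu^B \in \mathcal{A}^{N,B}$ and recast the search for Nash equilibria as a fixed-point problem analogous to the one solved in Subsection \ref{Optimal_Control_Representative_Trader_Mean_Field_Setting}. For a fixed profile of the other traders' strategies $(\nu^{N,k})_{k \neq j}$, trader $j$ faces a linear-quadratic stochastic control problem whose coefficients depend on the empirical measure $\mathcal{E}_{\nu^N_t}$, and whose value function satisfies a backward HJ equation of the form \eqref{HJexp}. The Riccati component decouples into the same ODE \eqref{riccati_equation}(i) as in the mean-field case (since it only depends on $a$, $\eta$, $\phi$), while the linear component yields an explicit formula for the best response $\hat{\nu}^{N,j}$ in terms of $Q^{N,j}_0$, conditional moments of $\mu$ with respect to $\mathcal{F}^{N,j}_t$, and the conditional expectation of $\bar{\nu}^{-j}_t := \tfrac{1}{N}\sum_{k \neq j}\nu^{N,k}_t$.

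Using the symmetry of the problem (i.i.d. initial inventories, identical cost parameters) I would look for a symmetric Nash equilibrium, which collapses the system of $N$ coupled best-response equations into a single fixed-point equation for a map on the space $\mathcal{D}$ of progressively measurable processes. This map has the structure $I - b L^N$, where $L^N$ is a finite-$N$ variant of the operator $L$ from \eqref{defLtildeL}, differing from it only through (i) the self-contribution $\nu^{N,j}/N$ inside $\mathcal{E}_{\nu^N_t}$, which contributes an $O(1/N)$ perturbation to the coefficients $B_t$, $E_t$ in the backward HJ equation, and (ii) the replacement of conditional means $\mathbb{E}[\bar{\nu}_s|\mathcal{F}_t]$ by conditional empirical means whose $L^2$-discrepancy is of order $N^{-1/2}$ by a law of large numbers estimate built on the independence of the $(Q^{N,j}_0)_j$. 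Together these give $\|L^N - L\|_{B(\mathcal{D})} \leq C / \sqrt{N}$ for a constant $C$ independent of $N$.

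Since $b$ satisfies the bound \eqref{bound of b} of Remark \ref{rmk on b}, so that $\|bL\|_{B(\mathcal{D})} < 1$, one can choose $N_0$ large enough that $\|bL^N\|_{B(\mathcal{D})} < 1$ for every $N \geq N_0$, and the Banach fixed-point theorem yields a unique symmetric Nash equilibrium. To upgrade this to uniqueness among all (not necessarily symmetric) Nash equilibria, I would invoke the strict concavity in $\nu$ of each trader's Hamiltonian (ensured by $\eta > 0$), which makes each best-response map single-valued, and then run a coupling argument on the coupled BSDE system characterizing Nash equilibria to propagate symmetry. The main obstacle is the quantitative control $\|L^N - L\|_{B(\mathcal{D})} = O(N^{-1/2})$: one must track the $1/N$ perturbations both inside the Riccati equation and in the linear BSDE for $\beta$, and then bound the discrepancy between empirical and conditional means uniformly on $[0,T]$ using Doob's inequality and the i.i.d. structure, in much the same spirit as the estimate \eqref{bound_of_L} established for $L$ itself.
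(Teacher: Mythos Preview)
Your overall strategy—solve each trader's linear--quadratic best response via the backward HJ equation \eqref{HJexp}, then close a fixed-point equation and invoke a contraction for small $b$—is exactly the paper's. But two points in your sketch do not match what actually happens, and the second is a genuine misconception.

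First, the Riccati component is \emph{not} the same ODE \eqref{riccati_equation}(i). The self-contribution $\tfrac{b}{N}Q_t^{N,j}\nu_t^{N,j}$ shifts $a$ to $a_N:=a-\tfrac{b}{2N}$, which perturbs both $C_t$ and $E_t$ (not $B_t$); the paper's $\gamma^N$ solves the $a_N$--Riccati equation and satisfies $\|\gamma^N-\gamma\|_\infty\le C/N$, cf.\ \eqref{estimate_for_gammaN}. Second, and more seriously, your item~(ii) invoking a law-of-large-numbers $O(N^{-1/2})$ discrepancy between conditional means and ``conditional empirical means'' does not belong here. Because trader~$j$'s filtration $\mathcal{F}^{N,j}_t$ is generated only by $Q_0^{N,j}$ and $\nu^B$, and the $Q_0^{N,k}$ are independent of it for $k\neq j$, one has $\mathbb{E}[\nu^{N,k}_t\,|\,\mathcal{F}^{N,j}_t]=\mathbb{E}[\nu^{N,k}_t\,|\,\mathcal{F}_t]$. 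Hence trader~$j$'s best response depends on the others only through the $\mathcal{F}_t$--adapted process $\bar\nu^{N,\hat\jmath}_t=\tfrac{1}{N}\sum_{k\neq j}\mathbb{E}[\nu^{N,k}_t|\mathcal{F}_t]$, \emph{not} through any random empirical average. Summing best responses and conditioning on $\mathcal{F}_t$ yields a fixed-point equation $\bar\nu^N=z^N\bar Q_0+\tilde L_N\mu+bL_N\bar\nu^N$ in which $L_N$ is a \emph{deterministic} perturbation of $L$ with $\|L_N-L\|_{B(\mathcal D)}\le C/N$, cf.\ \eqref{estimate_for_LN}; no LLN is involved. The $N^{-1/2}$ fluctuations only enter in Lemma~\ref{lem.finite_3}, when one compares the random $\tfrac{1}{N}\sum_j\nu^{N,j}_t$ to $\bar\nu^N_t$.

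This measurability reduction is also why no separate symmetry-plus-coupling argument is needed for uniqueness. Once $\bar\nu^N$ is pinned down by the contraction, the identity $\bar\nu^{N,\hat\jmath}=\bar\nu^N-\tfrac{1}{N}\bar\nu^{N,j}$ turns each trader's best-response formula into a second contraction $(I+\tfrac{b}{N-1}L_N)\bar\nu^{N,j}=\text{known}$, determining every $\bar\nu^{N,j}$ and then every $\nu^{N,j}$ uniquely in closed form \eqref{nuNj_definition}, without ever restricting to symmetric profiles.
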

\begin{lemma}\label{lem.finite_3}
Under the assumptions of Theorem \ref{thm.finite_1}, there exists $N_0 \in \mathbb{N}$ such that for any admissible control of the broker $\nu^B$ 
(note that it belongs to both $\mathcal{A}^{N,B}$ and $\mathcal{A}^B$), any $N \geq N_0$, and any Nash equilibrium $\nu^N$ associated with $(H^{N,j}[\nu^B](\cdot))_{1 \leq j \leq N}$, we have,
\begin{equation}\label{difference_between_hatnu_and_empirical_nu}
    \mathbb{E} \left[ \underset{0 \leq t \leq T}{\sup} \left| \frac{1}{N} \sum_{j=1}^{N} \nu_t^{N,j} - \mathbb{E}[\hat{\nu}_t | \mathcal{F}_t] \right|^2 \right] \leq \frac{C}{N},~~
    \mathbb{E} \left[ \underset{0 \leq t \leq T}{\sup} \left| \frac{1}{N} \sum_{j=1}^{N} \left(\nu_t^{N,j}\right)^2 - \mathbb{E}[\hat{\nu}_t^2 | \mathcal{F}_t] \right| \right] \leq \frac{C}{\sqrt{N}},
\end{equation}
where $\hat{\nu}_t$ is defined in equation \eqref{hatnu_solution}, and $C$ is a constant only related to $a$, $\eta$, $\phi$, $b$, $a^B$, $\eta^B$, $\phi^B$, $Q_0^B$, $\mathbb{E}[\mu]$, $\mathbb{E}[\mu^2]$, $\mathbb{E}[Q_0]$, $\mathbb{E}[Q_0^2]$, and $T$.
\end{lemma}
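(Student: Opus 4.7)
The plan is to mimic the explicit backward Hamilton--Jacobi analysis of Subsection \ref{Optimal_Control_Representative_Trader_Mean_Field_Setting} in the $N$-player setting, extract an explicit linear representation of the symmetric Nash equilibrium, and then compare it term-by-term with the MFG formulas \eqref{hatnu_solution}--\eqref{first_second_conditional_moment_of_hatnu} using the law of large numbers on the i.i.d.\ initial inventories $(Q_0^{N,j})$. Writing $\mu_t := \mathbb{E}[\mu \mid \mathcal{F}_t]$, the first step is to characterize the (unique, by Lemma \ref{lem.finite_2}) symmetric Nash equilibrium $\nu^N$ associated with $\nu^B$. For each trader $j$, best-responding against the others yields an LQ problem whose backward HJ equation is of the form \eqref{HJexp}, with coefficients identical to those of Subsection \ref{Optimal_Control_Representative_Trader_Mean_Field_Setting} up to two $1/N$-order modifications: (i) the trader's own contribution $\nu_t^{N,j}/N$ inside $\bar\nu_t^N$ produces an $O(1/N)$ correction to $E_t$, and (ii) the random coefficient $B_t$ involves $\mathbb{E}[\bar\nu_s^{N,-j}\mid\mathcal{F}_t^{N,j}]$ with $\bar\nu_s^{N,-j} := \tfrac{1}{N-1}\sum_{k\ne j}\nu_s^{N,k}$, which by independence of $Q_0^{N,k}$ from $\mathcal{F}_t^{N,j}$ for $k\ne j$ collapses into an expression depending on the other traders only through the deterministic $\mathbb{E}[Q_0]$. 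Exploiting this symmetric linear-quadratic structure, one obtains a representation
$$
\nu_t^{N,j}=\bar A_t^{N}\,\mathbb{E}[Q_0]+\bar B_t^{N}\,Q_0^{N,j}+\int_0^t\bar C_{s,t}^{N}\,\mu_s\,ds+\bar D_t^{N}\,\mu_t,
$$
with deterministic $(\bar A^N,\bar B^N,\bar C^N,\bar D^N)$ solving a fixed-point system analogous to \eqref{barA_barB_barC_barD_definition} but built from $O(1/N)$-perturbed operators $L^N,\hat L^N$.

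Using the uniform contraction of $I-bL$ enforced by \eqref{bound of b} and its robustness under $O(1/N)$ operator perturbations, I would then establish the coefficient bound
$$
\|\bar A^N-\bar A\|_\infty+\|\bar B^N-\bar B\|_\infty+\sup_{s,t}|\bar C_{s,t}^N-\bar C_{s,t}|+\|\bar D^N-\bar D\|_\infty\le\frac{C}{N}.
$$
Averaging the symmetric formula over $j$ gives $\tfrac{1}{N}\sum_j \nu_t^{N,j} = \bar A_t^N\,\mathbb{E}[Q_0] + \bar B_t^N\,\bar Q_0^N + \int_0^t \bar C_{s,t}^N\,\mu_s\,ds + \bar D_t^N\,\mu_t$, while \eqref{first_second_conditional_moment_of_hatnu} yields $\mathbb{E}[\hat\nu_t\mid\mathcal{F}_t]=(\bar A_t+\bar B_t)\,\mathbb{E}[Q_0]+\int_0^t\bar C_{s,t}\,\mu_s\,ds+\bar D_t\,\mu_t$ (using $\bar Q_0 = \mathbb{E}[Q_0]$). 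Their difference splits into (a) coefficient discrepancies, bounded by $C/N$ via the above display together with $\mathbb{E}[(\bar Q_0^N)^2]\le\mathbb{E}[Q_0^2]$, and (b) the genuinely empirical term $\bar B_t^N(\bar Q_0^N-\mathbb{E}[Q_0])$, whose $L^2$-norm is $O(1/\sqrt{N})$ by the classical LLN under the finite second-moment hypothesis. Squaring and taking $\sup_t$ (absorbed into constants because the coefficient functions are bounded analytic on $[0,T]$) yields the first bound in \eqref{difference_between_hatnu_and_empirical_nu}.

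For the second inequality, I would expand $\tfrac{1}{N}\sum_j (\nu_t^{N,j})^2$ using the symmetric linear form and compare with the formula for $\bar M_t$ in \eqref{first_second_conditional_moment_of_hatnu}. The purely diagonal contribution $(\bar B_t^N)^2\cdot\tfrac{1}{N}\sum_j (Q_0^{N,j})^2$ matches $(\bar B_t)^2\,\mathbb{E}[Q_0^2]$ up to an error controlled in $L^1$ by the LLN applied to $((Q_0^{N,j})^2)_j$: this is precisely where the fourth-moment hypothesis $m_0\in\mathcal{P}_4$ enters, providing the $L^2$-control of $\tfrac{1}{N}\sum_j(Q_0^{N,j})^2-\mathbb{E}[Q_0^2]$ needed for rate $1/\sqrt{N}$. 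The remaining cross and constant terms are handled via the $L^2$-estimate of the first step, Cauchy--Schwarz, and the coefficient bound above, with uniformity in $t$ again provided by the boundedness of the analytic coefficients. The main obstacle is the first step: deriving rigorously the symmetric Nash equilibrium and establishing $O(1/N)$ stability of its deterministic coefficients. The delicate point is that trader $j$'s conditioning on the smaller filtration $\mathcal{F}_t^{N,j}$ mixes $Q_0^{N,j}$ into the drift of $\beta^{N,j}$ at order $1/N$, so one must verify that the associated finite-$N$ fixed-point operator remains a contraction with constant bounded away from $1$ uniformly in $N$ under \eqref{bound of b}.
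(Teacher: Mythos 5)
Your proposal follows essentially the same route as the paper: the explicit linear representation $\nu_t^{N,j}=\bar A_t^N\bar Q_0+\bar B_t^N Q_0^{N,j}+\int_0^t\bar C_{s,t}^N\mu_s\,ds+\bar D_t^N\mu_t$ and the $O(1/N)$ coefficient estimates you describe are exactly equations \eqref{nuNj_definition} and \eqref{estimate_for_ABCDN}, which the paper establishes in the proof of Lemma \ref{lem.finite_2} and then invokes (together with Cauchy--Schwarz, the independence of the $Q_0^{N,j}$, and the law of large numbers under the fourth-moment hypothesis) to conclude Lemma \ref{lem.finite_3}. Your write-up is correct and in fact fills in the comparison step that the paper leaves as ``easily proved.''
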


\begin{proof}[Proof of Theorem \ref{thm.finite_1}]

    Throughout the proof, we assume $N \geq N_0$, where $N_0$ is as defined in Lemma \ref{lem.finite_3}. \\

    (i) First, we estimate the inventory process of the broker. By the first inequality in \eqref{difference_between_hatnu_and_empirical_nu}, for any Nash equilibrium $\nu^N$ associated with $(H^{N,j}[\nu^B](\cdot))_{1 \leq j \leq N}$, we can derive:
    \begin{equation}\label{difference_between_QNB_and_QB}
        \mathbb{E} \left[ \underset{0 \leq t \leq T}{\sup} \left| Q_t^{N,B} - Q^B_t \right|^2 \right] \leq \frac{C}{N},
    \end{equation}
    where
    $$
    Q_t^{N,B} := Q^B_t[\mathcal{E}_{\nu^N}, \nu^B],~~ Q_t^{B} := Q^B_t[m^{\nu^B}, \nu^B].
    $$
    Moreover, we also know that 
    \begin{align*}
        \|Q^B\|^2_{\mathbf{L}^2 ([0,T] \times \Omega)} &\leq 2T \cdot (Q_0^B)^2 + 2 \cdot \int_0^T \mathbb{E} \left[ \left| \int_0^t (\nu_s^B - \bar{\nu}_s) \, ds \right|^2 \right] dt \\
        &\leq C + C \cdot \|\nu^B - \bar{\nu}\|^2_{\mathbf{L}^2 ([0,T] \times \Omega)} \leq C \cdot \left( 1 + \|\nu^B\|^2_{\mathbf{L}^2 ([0,T] \times \Omega)} \right),
    \end{align*}
    where the last inequality comes from \eqref{first_second_conditional_moment_of_hatnu}, which shows that $\|\bar{\nu}\|_{\mathbf{L}^2 ([0,T] \times \Omega)}$ is always bounded by a constant only related to $a$, $\eta$, $\phi$, $b$, $a^B$, $\eta^B$, $\phi^B$, $Q_0^B$, $\mathbb{E}[\mu]$, $\mathbb{E}[\mu^2]$, $\mathbb{E}[Q_0]$, $\mathbb{E}[Q_0^2]$, and $T$. Hence, we have:
    \begin{equation}\label{QB_estimate}
        \|Q^B\|_{\mathbf{L}^2 ([0,T] \times \Omega)} \leq C \left( 1 + \|\nu^B\|_{\mathbf{L}^2 ([0,T] \times \Omega)} \right).
    \end{equation}

    (ii) Next, we prove that 
    \begin{equation}\label{HNB_estimate}
        \left| H^{N,B}\left[\mathcal{E}_{\nu^N}\right]\left(\nu^B\right) - H^{N,B}\left[m^{\nu^B}\right]\left(\nu^B\right) \right| \leq \frac{C}{\sqrt{N}} \left( 1 + \|\nu^B\|_{\mathbf{L}^2 ([0,T] \times \Omega)} \right),
    \end{equation}
    where $\nu^B \in \mathcal{A}^{N,B}$ is an admissible control for the major player, 
    $(\mathcal{E}_{\nu_t^N})_{t \in [0,T]}$ is the empirical distribution of $\nu^N$ at time $t$, which attains a Nash equilibrium in the finite game, 
    and $(m^{\nu^B}_t)_{t \in [0,T]}$ is the distribution of $\hat{\nu}$ conditional to $\mathcal{F}_t := \{ s \to \nu^B_s ; 0 \leq s \leq t \}$ at time $t$.    
    Here, $C$ is a constant only depending on $a$, $\eta$, $\phi$, $b$, $a^B$, $\eta^B$, $\phi^B$, $Q_0^B$, $\mathbb{E}[\mu]$, $\mathbb{E}[\mu^2]$, $\mathbb{E}[Q_0]$, $\mathbb{E}[Q_0^2]$, and $T$. \\

    By formula \eqref{broker_obj_infinite}, $H^{N,B}$ is composed of two parts, $L^B$ and $F^B$. Hence, we will estimate $H^{N,B}$ according to these two parts. 
    Combining inequalities \eqref{difference_between_QNB_and_QB}-\eqref{QB_estimate}, for $L^B$, we have
    \begin{align*}
        & \left| \mathbb{E} \left[ \int_0^T L^B\left( Q_t^{N,B}, \nu^B_t \right) - L^B\left( Q_t^B, \nu^B_t \right) \, dt \right] \right| 
    \leq C \cdot \mathbb{E} \left[ \int_0^T \left| ( Q_t^{N,B} )^2 - ( Q_t^B )^2 \right| + \left| \left( Q_t^{N,B} - Q_t^B \right) \nu_t^B \right| \, dt \right] \\
    &\qquad \qquad \qquad \qquad \leq C \cdot \mathbb{E} \left[ \int_0^T \left| Q_t^{N,B}  - Q_t^B \right| \cdot \left( | Q_t^{N,B} | + | Q_t^B | + | \nu_t^B | \right) \, dt \right] \\
    &\qquad \qquad \qquad \qquad \leq C \cdot \|Q^{N,B} - Q^B\|_{\mathbf{L}^2 ([0,T] \times \Omega)} \left( \|Q^{N,B} \|_{\mathbf{L}^2 ([0,T] \times \Omega)} + \|Q^B\|_{\mathbf{L}^2 ([0,T] \times \Omega)} + \|\nu^B\|_{\mathbf{L}^2 ([0,T] \times \Omega)} \right) \\
    &\qquad \qquad \qquad \qquad \leq \frac{C}{\sqrt{N}} \left( 1 + \|\nu^B\|_{\mathbf{L}^2 ([0,T] \times \Omega)} \right),
    \end{align*} 
    while for $F^B$, combining inequalities \eqref{first_second_conditional_moment_of_hatnu}, and \eqref{difference_between_hatnu_and_empirical_nu}-\eqref{QB_estimate}, we have
    \begin{align*}
    &\left| \mathbb{E} \left[ \int_0^T F^B\left( Q_t^{N,B}, \mathcal{E}_{\nu_t^N} \right) - F^B\left( Q_t^B, m^{\nu^B}_t \right) \, dt \right] \right| \\
    &\leq C \cdot \mathbb{E} \left[ \int_0^T \left| \frac{1}{N} \sum_{j=1}^{N} \left( \nu_t^{N,j} \right)^2 - \mathbb{E}[\hat{\nu}_t^2 | \mathcal{F}_t] \right| \, dt \right] \\
    &\qquad\qquad\qquad+ C \cdot \mathbb{E} \left[ \int_0^T \left| Q_t^{N,B} \left((b + 2a^B) \cdot \frac{1}{N} \sum_{j=1}^{N} \nu_t^{N,j} + \mu \right) - Q_t^B \left((b + 2a^B) \cdot \bar{\nu}_t + \mu \right) \right| \, dt \right] \\
    &\leq \frac{C}{\sqrt{N}} + C \cdot \mathbb{E} \left[ \int_0^T \left| Q_t^{N,B} - Q_t^B \right| \cdot \left( |\bar{\nu}_t| + |\mu| \right) + \left| Q_t^{N,B} \right| \cdot \left| \frac{1}{N} \sum_{j=1}^{N} \nu_t^{N,j} - \bar{\nu}_t \right| \, dt \right] \\
    &\leq \frac{C}{\sqrt{N}} + C \cdot \|Q^{N,B} - Q^B\|_{\mathbf{L}^2 ([0,T] \times \Omega)} \left( \|\bar{\nu}\|_{\mathbf{L}^2 ([0,T] \times \Omega)} + \|\mu\|_{\mathbf{L}^2 (\Omega)} \right) + C \cdot \big\| \frac{1}{N} \sum_{j=1}^{N} \nu_t^{N,j} - \bar{\nu}_t \big\|_{\mathbf{L}^2 ([0,T] \times \Omega)} \|Q^{N,B}\|_{\mathbf{L}^2 ([0,T] \times \Omega)} \\
    &\leq \frac{C}{\sqrt{N}} \left( 1 + \|\bar{\nu}\|_{\mathbf{L}^2 ([0,T] \times \Omega)} + \|\mu\|_{\mathbf{L}^2 (\Omega)} + \|Q^{N,B}\|_{\mathbf{L}^2 ([0,T] \times \Omega)} \right) \\
    &\leq \frac{C}{\sqrt{N}} \left( 1 + \|\nu^B\|_{\mathbf{L}^2 ([0,T] \times \Omega)} \right).
    \end{align*} 
    Hence, in the end, we can derive equation \eqref{HNB_estimate}. \\

    (iii) Now, we try to prove that if $\mathcal{E}_{\nu^N}$ attains the Nash equilibrium given $\nu^B$, then we have the fact that the optimizer of problem \eqref{lem.finite_problem_for_broker} should ensure
    \begin{equation}\label{a-priori_estimate_of_nuB}
        \|\nu^B\|_{\mathbf{L}^2([0,T] \times \Omega)} \leq C,
    \end{equation}
    where $C$ is a constant related only to $a$, $\eta$, $\phi$, $b$, $a^B$, $\eta^B$, $\phi^B$, $Q_0^B$, $\mathbb{E}[\mu]$, $\mathbb{E}[\mu^2]$, $\mathbb{E}[Q_0]$, $\mathbb{E}[Q_0^2]$, and $T$.\\

    Note that, by applying the Cauchy-Schwarz inequality, we have
    \begin{align*}
        H^{N,B}&\left[\mathcal{E}_{\nu^N}\right]\left(\nu^B\right)  = \mathbb{E} \left[ - a^B \left( (Q_T^{N,B})^2 - (Q_0^{N,B})^2 \right) + \int_0^T - \eta^B (\nu^{B}_t)^2 - \phi^B (Q^{N,B}_t)^2 + F^B(Q_t^{N,B}, \mathcal{E}_{\nu_t^N}) \, \mathrm{d}t \right] \\
        & \leq a^B \cdot (Q_0^{B})^2 + \mathbb{E} \left[ \int_0^T - \eta^B (\nu^{B}_t)^2 - \frac{\phi^B}{2} (Q^{N,B}_t)^2 + \frac{1}{2\phi^B} \left( (b + 2a^B) \frac{1}{N} \sum_{j=1}^{N} \nu_t^{N,j} +\mu \right)^2 + \eta \frac{1}{N} \sum_{j=1}^{N} (\nu_t^{N,j})^2 \, \mathrm{d}t \right]\\
        & \leq - \eta^B \cdot \|\nu^B\|^2_{\mathbf{L}^2([0,T] \times \Omega)} + C \cdot \left( 1 + \mathbb{E} \left[ \int_0^T \left( \frac{1}{N} \sum_{j=1}^{N} \nu_t^{N,j}\right)^2 + \frac{1}{N} \sum_{j=1}^{N} (\nu_t^{N,j})^2 \, \mathrm{d}t \right] \right) \\
        & \leq - \eta^B \cdot \|\nu^B\|^2_{\mathbf{L}^2([0,T] \times \Omega)} + C \cdot \left( 1 + \|\bar \nu\|^2_{\mathbf{L}^2([0,T] \times \Omega)} + \|\bar M\|_{\mathbf{L}^1([0,T] \times \Omega)} \right) \\
        & \leq - \eta^B \cdot \|\nu^B\|^2_{\mathbf{L}^2([0,T] \times \Omega)} + C,
    \end{align*}
    where the last inequality is derived from equation \eqref{first_second_conditional_moment_of_hatnu}, which shows that $\|\bar \nu\|_{\mathbf{L}^2 ([0,T] \times \Omega)}$ and $\|\bar M\|_{\mathbf{L}^1 ([0,T] \times \Omega)}$ are always bounded by a constant related only to $a$, $\eta$, $\phi$, $b$, $a^B$, $\eta^B$, $\phi^B$, $Q_0^B$, $\mathbb{E}[\mu]$, $\mathbb{E}[\mu^2]$, $\mathbb{E}[Q_0]$, $\mathbb{E}[Q_0^2]$, and $T$. Hence, we know that the optimizer of problem \eqref{lem.finite_problem_for_broker} should satisfy inequality \eqref{a-priori_estimate_of_nuB}.\\

    (iv) Now we complete the proof of Theorem \ref{thm.finite_1}. Let $\nu^B$ be any admissible control ensuring $\|\nu^B\|_{\mathbf{L}^2([0,T] \times \Omega)} \leq C$, which satisfies the estimate \eqref{a-priori_estimate_of_nuB} for the optimizer of problem \eqref{lem.finite_problem_for_broker} by (ii), and let $\hat \nu^B$ be the one defined in equation \eqref{hatnuB_solution}, which is also an admissible control for the broker and a maximizer for problem $H^{B}[m^{\nu^B}] (\nu^B)$ by Theorem \ref{best_control_for_broker}. Then, by Lemma \ref{lem.finite_2}, there exist $\nu^N$ and $\hat \nu^N$ unique Nash equilibra for $(H^{N,j}[\nu^B](\cdot))_{1 \leq j \leq N}$ and $(H^{N,j}[\hat \nu^B](\cdot))_{1 \leq j \leq N}$ respectively. Hence
    \begin{align*}
        H^{N,B}\left[\mathcal{E}_{\nu^N}\right]\left(\nu^B\right) &\leq H^{N,B}\left[m^{\nu^B}\right]\left(\nu^B\right) + \frac{C}{\sqrt{N}} = H^{B}\left[m^{\nu^B}\right]\left(\nu^B\right) + \frac{C}{\sqrt{N}} \\
        & \leq H^{B}\left[m^{\hat \nu^B}\right]\left(\hat \nu^B\right) + \frac{C}{\sqrt{N}} = H^{N,B}\left[m^{\hat \nu^B}\right]\left(\hat \nu^B\right) + \frac{C}{\sqrt{N}} \leq H^{N,B}\left[\mathcal{E}_{\hat \nu^N}\right]\left(\hat \nu^B\right) + \frac{C}{\sqrt{N}},
    \end{align*}
    where the first and third inequalities come from equation \eqref{HNB_estimate}, the two equalities come from equations \eqref{broker_obj_infinite} and \eqref{broker_obj_finite}, and the second inequality comes from Theorem \ref{best_control_for_broker}.
\end{proof}

\subsection{Proof of Lemma \ref{lem.finite_2} and \ref{lem.finite_3}}

\begin{proof}[Proof of Lemma \ref{lem.finite_2}]

First, note that, given an admissible control of the broker $\nu^B$, for any admissible control of traders $\nu^N$, we have
\begin{equation*}
    H^{N,j} [\nu^B] (\nu^N) = \mathbb{E} \left[ \int_0^T L(Q_t^{N,j},\nu^{N,j}_t) + Q_t^{N,j} \mathbb{E} [\mu | \mathcal{F}^{N,j}_t ] + b \cdot Q_t^{N,j} \cdot \left( \frac{1}{N} \nu^{N,j}_t + \frac{1}{N} \sum_{k \neq j} \mathbb{E} [\nu^{N,k}_t | \mathcal{F}_t^{N,j} ] \right) \, \mathrm{d}t \right].
\end{equation*}
Since $\mu$ is independent of $Q_0^{N,j}$, $\mathcal{F}_t \subset \sigma(\mu)$, and $\nu^{N,k}_t$ is $\mathcal{F}_t \vee \sigma(Q_0^{N,k})$-progressively measurable, we have 
\[
\mathbb{E} [\mu | \mathcal{F}^{N,j}_t ] = \mathbb{E} [\mu | \mathcal{F}_t ] = \mu_t, ~~
\bar \nu^{N,\hat{j}}_t := \frac{1}{N} \sum_{k \neq j} \mathbb{E} [\nu^{N,k}_t | \mathcal{F}_t^{N,j} ] = \frac{1}{N} \sum_{k \neq j} \mathbb{E} [\nu^{N,k}_t | \mathcal{F}_t ].
\]
Now, by setting $a_N := a - \frac{b}{2N}$ and $N_0 \geq \frac{b}{a}$, we can transform the optimization problem
\[
\left\{
    \begin{aligned}
        & Q_t^{N,j} = Q_0^{N,j} + \int_0^t \nu_s \, \mathrm{d}s, \\
        & \underset{\nu \in \mathcal{A}^{N,j}}{\sup} H^{N,j}[\nu^B] (\nu, (\nu^{N,k})_{k \neq j}),
    \end{aligned}
\right.
\]
into solving an HJ equation
\begin{equation*}
    \left\{
    \begin{aligned}
        \mathrm{d}u_t(q) &= H_t(q, D u_t (q)) \, \mathrm{d}t + \mathrm{d} M_t(q), \\
        u_T(q) &= 0,
    \end{aligned}
    \right.
\end{equation*}
where $H_t(q, \xi)$ is defined as follows:
\begin{align*}
    H_t(q, \xi) &= \min_{\nu \in \mathbb{R}} \left\{  -\nu\xi + \eta \nu^2 + 2 a_N q \nu + \phi q^2 - q (b \bar \nu^{N,\hat{j}}_t + \mu_t)  \right\} \\
    &= - q (b \bar \nu^{N,\hat{j}}_t + \mu_t) + \phi q^2 - \frac{1}{4\eta} \left| \xi - 2 a_N q \right|^2.
\end{align*}
Hence, via a similar process in the proof of Lemma \ref{lem.hatnu}, we find the value function to be
\[
u_t(q) = \alpha^{N,j}_t + \beta^{N,j}_t q + \gamma^{N,j}_t \frac{q^2}{2},
\]
where $\beta^{N,j}$ and $\gamma^{N,j}$ can be determined by
\begin{equation}
    \left\{
        \begin{aligned}
            &\mathrm{d} \gamma_t^N = \left( \phi - \frac{a_N^2}{\eta} + 2 \frac{a_N}{\eta} \gamma_t^N - \frac{1}{2\eta} (\gamma_t^N)^2  \right) \, \mathrm{d}t, ~~ \gamma_T^N = 0, \\
            & \beta_t^{N,j} = b \int_t^T \Gamma^N_{s,t} \mathbb{E}[\bar \nu^{N,\hat{j}}_s | \mathcal{F}_t] \, \mathrm{d}s + \mu_t \int_t^T \Gamma^N_{s,t} \, \mathrm{d}s, \\
            & \Gamma^N_{s,t} := \exp\left(\int_s^t \frac{\gamma_r^N - 2 a_N}{2\eta} \, \mathrm{d}r\right).
        \end{aligned}
    \right.
\end{equation}

We also have the estimate for $\gamma^N$ and hence for $\Gamma^N$
\begin{equation}\label{estimate_for_gammaN}
    \sup_{0 \leq t \leq T} \left| \gamma_t^N - \gamma_t \right| \leq \frac{C}{N},~~ \sup_{0 \leq s,t \leq T} \left| \Gamma_{s,t}^N - \Gamma_{s,t} \right| \leq \frac{C}{N}.
\end{equation}
By the maximum principle applied to equation \eqref{HJexp} (\citeauthor{bismut1976linear} \cite{bismut1976linear}), the optimal control is uniquely determined by
\begin{equation}
    \left\{
        \begin{aligned}
            & Q_t^{N,j} = \nu_t^{N,j} \, \mathrm{d}t,\\
            & \nu_t^{N,j} = - D_\xi H_t (Q_t^{N,j}, D u_t(Q_t^{N,j})) = \frac{1}{2\eta} \left( (\gamma_t^N - 2 a_N) Q_t^{N,j} + \beta_{t}^{N,j} \right).
        \end{aligned}
    \right.
\end{equation}
Hence, we would have
\[
    Q_t^{N,j} = \Gamma^N_{0,t} Q_0^{N,j} + \frac{1}{2\eta} \int_0^t \Gamma^N_{s,t} \beta_t^{N,j} \, \mathrm{d}s,
\]
and
\begin{equation}\label{noname1}
\begin{aligned}
    \nu_t^{N,j} = & \frac{\gamma_t^N - 2 a_N}{2\eta} \Gamma^N_{0,t} \cdot Q_0^{N,j} + \left[  \mu_t \int_t^T \frac{1}{2\eta} \Gamma^N_{s,t} \, \mathrm{d}s + \int_0^t \mu_s \left( \frac{\gamma_t^N - 2 a_N}{4\eta^2} \Gamma^N_{s,t} \int_s^T \Gamma^N_{r,s} \, \mathrm{d}r \right) \, \mathrm{d}s \right] \\
    & + b \left[  \int_0^t \int_s^T  \frac{\gamma_t^N - 2 a_N}{4\eta^2} \Gamma^N_{s,t} \Gamma^N_{r,s} \mathbb{E}[\bar \nu_r^{N,\hat{j}} | \mathcal{F}_s] \, \mathrm{d}r \, \mathrm{d}s + \int_t^T \frac{1}{2\eta} \Gamma^N_{s,t} \mathbb{E}[\bar \nu_s^{N,\hat{j}} | \mathcal{F}_t] \, \mathrm{d}s \right] \\
    = & z_t^N Q_0^{N,j} + \tilde L_N \mu_t + b \cdot \frac{N}{N-1} \cdot L_N \bar \nu_t^{N,\hat{j}},
\end{aligned}    
\end{equation}
with $z^N$, $L_N$, and $\tilde L_N$ defined as follows:
\begin{equation*}
    \left\{
        \begin{aligned}
            & z_t^N = \frac{\gamma_t^N - 2 a_N}{2\eta} \Gamma^N_{0,t}, \\
            & L_N \xi_t = \frac{N-1}{N} \left[ \int_0^t \int_s^T  \frac{\gamma_t^N - 2 a_N}{4\eta^2} \Gamma^N_{s,t} \Gamma^N_{r,s} \mathbb{E}[\xi_r | \mathcal{F}_s] \, \mathrm{d}r \, \mathrm{d}s + \int_t^T \frac{1}{2\eta} \Gamma^N_{s,t} \mathbb{E}[\xi_s | \mathcal{F}_t] \, \mathrm{d}s \right], \\
            & \tilde L_N \xi_t = \xi_t \int_t^T \frac{1}{2\eta} \Gamma^N_{s,t} \, \mathrm{d}s + \int_0^t \xi_s \left( \frac{\gamma_t^N - 2 a_N}{4\eta^2} \Gamma^N_{s,t} \int_s^T \Gamma^N_{r,s} \, \mathrm{d}r \right) \, \mathrm{d}s.
        \end{aligned}
    \right.
\end{equation*}
Hence, if $\nu^N$ is indeed a Nash equilibrium given $\nu^B$, by summing the above equation over $j$ and dividing by $N$, we have
\begin{equation}\label{hatnuN_solution}
    \hat \nu^N_t = z_t^N \hat Q_0^N + \tilde L_N \mu_t + b \cdot L_N \bar \nu_t^N,
\end{equation}
where $\hat \nu^N$, $\bar \nu^N$, and $\hat Q_0^N$ are defined by
\[
\hat \nu^N_t = \frac{1}{N} \sum_{j=1}^N \nu_t^{N,j},~~ \bar \nu^N_t = \mathbb{E}[\hat \nu_t^N | \mathcal{F}_t],~~ \hat Q^N_0 = \frac{1}{N} \sum_{j=1}^N Q_0^{N,j}.
\]
By the definition and equation \eqref{estimate_for_gammaN}, we also have the estimate for $z^N$, $L^N$, and $\tilde L^N$, that is
\begin{equation}\label{estimate_for_LN}
    \sup_{0 \leq t \leq T} \left| z^N_t - z_t \right| \leq \frac{C}{N},~~\left\| \left( L_N - L \right) \xi \right\|_{\mathcal{D}} \vee \left\| \left( \tilde L_N - \tilde L \right) \xi \right\|_{\mathcal{D}} \leq \frac CN \left\| \xi \right\|_{\mathcal{D}}.
\end{equation}
Now, by applying conditional expectation with respect to $\mathcal{F}_t$ on both sides of equation \eqref{hatnuN_solution}, we have
\begin{equation*}
    \bar \nu^N_t = z_t^N \bar Q_0 + \tilde L_N \mu_t + b \cdot L_N \bar \nu_t^N,
\end{equation*}
which, through a similar process as in the proof of Lemma \ref{lem.hatnu}, when $N_0 \geq C/(b^{-1} - \|L\|_{B(\mathcal{D)}} \vee \|\hat L\|_{B(\mathcal{C})})$, guarantees a unique solution as follows:
\begin{equation}\label{barnuN_definition}
    \bar \nu^N_t = \bar A_t^N \bar Q_0 + \bar B_t^N \bar Q_0 + \int_0^t \bar C_{s,t}^N \mu_s \, \mathrm{d}s + \bar D_t^N \mu_t,
\end{equation}
where $(\bar A^N, \bar B^N, \bar C^N, \bar D^N) \in C^{\omega}([0,T] \times [0,T], \mathbb{R}^4)$, and we have the estimate
\begin{equation}\label{estimate_for_ABCDN}
    \sup_{0 \leq s,t \leq T} \max \left\{ \left| \bar A^N_t - \bar A_t \right|, \left| \bar B^N_t - \bar B_t \right|, \left| \bar C^N_{s,t} - \bar C_{s,t} \right|, \left| \bar D^N_t - \bar D_t \right| \right\} \leq \frac{C}{N}.
\end{equation}
Hence, when $N_0 \geq \max \{ b/a, C/(b^{-1} - \|L\|_{B(\mathcal{D})} \vee \|\hat L\|_{B(\mathcal{C})}) \}$, if $\nu^N$ is indeed a Nash equilibrium given $\nu^B$, $\bar \nu^N_t$ must be given by \eqref{barnuN_definition}. Now, since we also have
\begin{equation}\label{noname2}
    \bar \nu_t^{N, \hat{j}} = \frac{1}{N} \sum_{k \neq j} \mathbb{E}[\nu^{N,k}_t | \mathcal{F}_t] = \bar \nu^N_t - \frac{1}{N} \mathbb{E}[\nu^{N,j}_t | \mathcal{F}_t] =: \bar \nu^N_t - \frac{1}{N} \bar \nu_t^{N,j},
\end{equation}
by combining equations \eqref{noname1} and \eqref{noname2}, we can then derive
\begin{equation}
    \bar \nu^{N,j}_t = z_t^N \bar Q_0 + \tilde L_N \mu_t + b \cdot \frac{N}{N-1} \cdot L_N \bar \nu_t^N - \frac{b}{N-1} \cdot L_N \bar \nu_t^{N,j},
\end{equation}
which shows that $\bar \nu^{N,j}$ can also be uniquely determined by 
\begin{equation*}
    \bar \nu^{N,j} = (I + \frac{b}{N-1} L_N )^{-1} \left( Z^N \bar Q_0 + \tilde L_N \mu + b \cdot \frac{N}{N-1} \cdot L_N \bar \nu^N \right).
\end{equation*}
Finally, we will have $\nu_t^{N,j}$ uniquely determined by 
\begin{equation*}
    \nu_t^{N,j} = z_t^N Q_0^{N,j} + \tilde L_N \mu_t + b \cdot \frac{N}{N-1} \cdot L_N \bar \nu_t^N - b \cdot \frac{1}{N-1} \cdot L_N \bar \nu_t^{N,j},
\end{equation*}
which shows the uniqueness and existence of the Nash equilibrium of traders given $\nu^B$. In fact, we can write down the formula of $\nu^N$ by $(\bar A^N, \bar B^N, \bar C^N, \bar D^N)$:
\begin{equation}\label{nuNj_definition}
    \nu^{N,j}_t = \bar A_t^N \bar Q_0 + \bar B_t^N Q_0^{N,j} + \int_0^t \bar C_{s,t}^N \mu_s \, \mathrm{d}s + \bar D_t^N \mu_t,
\end{equation}
which is indeed a Nash equilibrium associated with $(H^{N,j}[\nu^B](\cdot))_{1 \leq j \leq N}$ and is also a unique one.

\end{proof}

\begin{proof}[Proof of Lemma \ref{lem.finite_3}]

According to the Cauchy-Schwarz inequality, the independence of $\left(Q_0^{N,j}\right)_{1 \leq j \leq N}$, equations \eqref{first_second_conditional_moment_of_hatnu}, \eqref{estimate_for_ABCDN}, and \eqref{nuNj_definition}, it can be easily proved.

\end{proof}

\section{Numerical example}\label{nontrivial}

Throughout this section, we assume that $\eta\phi = a^2$ and $\eta^B\phi^B = (a^B)^2$, so that both $\gamma$ and $\gamma^B$ vanish. The parameters used here are summarized in Table~\ref{tab:params_example1}. Note that an important consequence of our result is that the optimal strategy does not depend on the distribution of $\mu$, but only on its expectation and realized value. 

\begin{table}[h]
\centering
\begin{tabular}{|c|c||c|c|}
\hline
Parameter & Value & Parameter & Value \\
\hline
$a$       & $1 \times 10^{-2}\ \$ $   & $a^B$       & $1 \times 10^{-2} \ \$ $ \\
$\eta$    & $1 \times 10^{-2}\ \$ \cdot \text{day}$   & $\eta^B$    & $5 \times 10^{-3}\ \$ \cdot \text{day}$ \\
$T$       & $2\ \text{days}$                  &       $\mathbb E [\mu]$ &  $0\ \$ \cdot \text{day}^{-1}$                       \\
 & & $\mu$ (realized) & $5\ \$ \cdot \text{day}^{-1}$\\
\hline
\end{tabular}
\caption{Parameter values.}
\label{tab:params_example1}
\end{table}

Our first goal is to show that, under this setting and for sufficiently small $b$, the critical time defined in Theorem~\ref{best_control_for_broker} is non-trivial, that is, it differs from both $0$ and $T$.\\

We begin by computing the expressions for $\beta^1$, $\beta^2$, and $\beta^3$. According to formula~\eqref{beta_parameter_definition}, we find:
\begin{equation}\label{beta_parameter_example}
    \beta_s^1,~\beta_s^2 = O(b), \qquad \beta_s^3 = \int_s^T \Gamma_{r,s} \, \mathrm{d}r + O(b) = e^{T-s} -1 + O(b).
\end{equation}
Hence, $\bar C_{s,t}$ and $\bar D_t$ in~\eqref{barA_barB_barC_barD_definition} become:
\begin{equation}\label{barC_barD_example}
    \bar C_{s,t} = 50(e^{s-t} - e^{T-t}) + O(b), \qquad \bar D_t = 50(e^{T-t} - 1) + O(b).
\end{equation}
Similarly, $\bar C^B_{s,t}$ and $\bar D^B_t$ in~\eqref{barAB_barCB_barDB_definition} satisfy:
\begin{equation}\label{barCB_barDB_example}
    \left\{
    \begin{aligned}
        \bar C^B_{s,t} & = -e^{2T-2t} + e^{T + s - 2t} + e^{T - t} - e^{s - t} + O(b), \\ 
        \bar D^B_t     & = \left(-T + t + \frac{5}{2}\right) e^{2T-2t} - 3 e^{T - t} + \frac{1}{2} + O(b).
    \end{aligned}
    \right.
\end{equation}

We then compute $\mathcal{A}^\prime$ as defined in~\eqref{A_derivative_definition}:
\begin{align*}
    \mathcal{A}^\prime_s = & 
    50 \left(T - s - \frac{5}{2} \right) \left( T - s - 2 \right) e^{4(T-s)}
    + 275 \left( T - s - \frac{13}{6} \right) e^{3(T-s)} \\
    & - 50 \left( T - s - \frac{107}{12} \right) e^{2(T-s)}
    - \frac{225}{2} e^{T-s}
    - 50 \left( T - s - \frac{17}{12} \right) 
    + 50 \left( T - s - \frac{13}{6} \right) e^{-(T-s)}
    + 50 e^{-2(T-s)} + O(b).
\end{align*}

\begin{figure}[!htbp]
    \centering
    \includegraphics[width=0.55\textwidth]{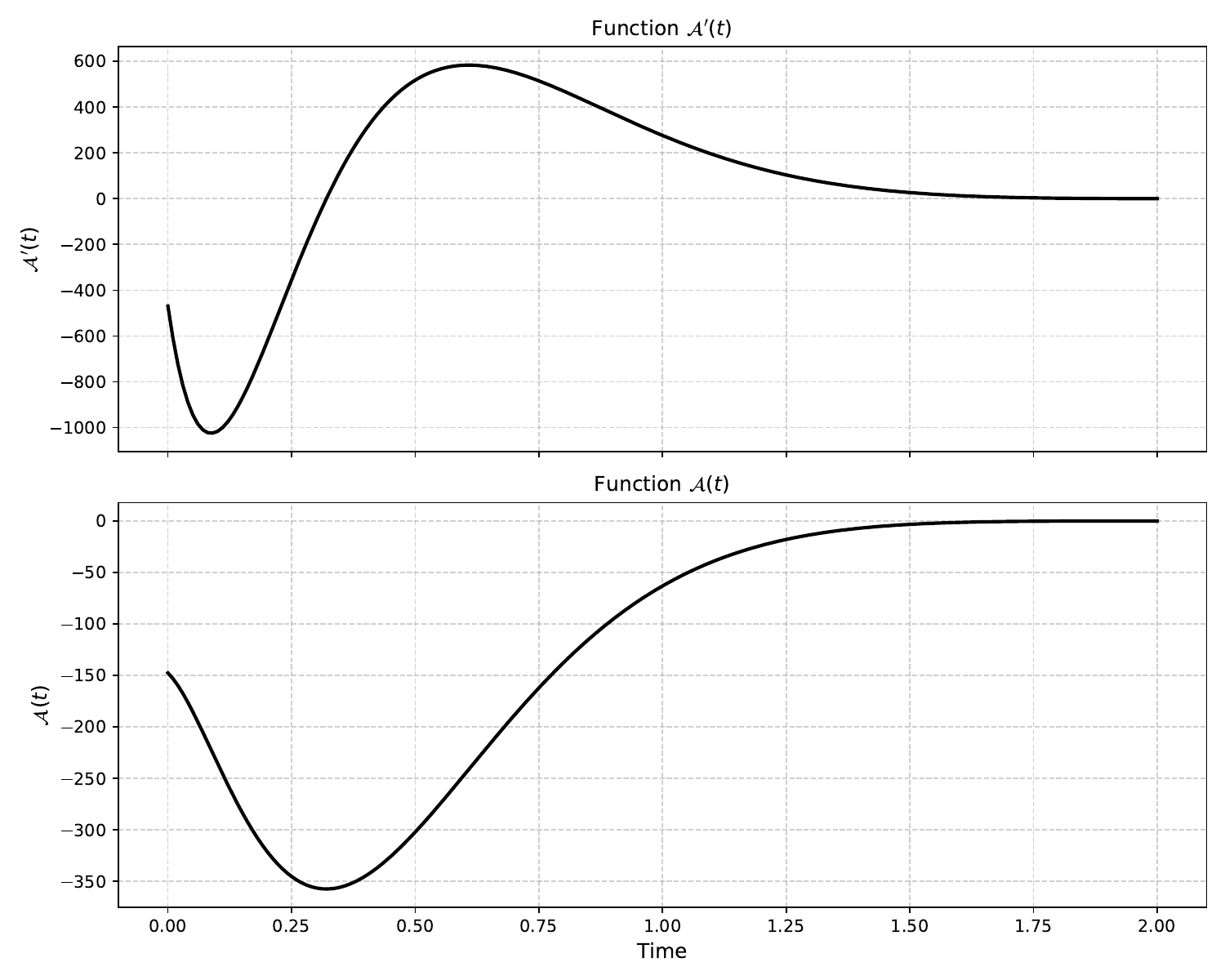}
    \caption{The main terms of $\mathcal{A}^\prime$ and $\mathcal A$.}
    \label{fig:A_functions}
\end{figure}

Figure~\ref{fig:A_functions} displays the functions $\mathcal{A}^\prime$ and its antiderivative $\mathcal{A}$, neglecting the $O(b)$ term. We observe that, as long as $b$ is sufficiently small, the function $\mathcal{A}$ attains a non-positive minimum at a point distinct from both $0$ and $T$. More precisely, this minimum occurs around $t \simeq 0.32$.\\

Our interpretation is as follows. The trading window $[0,T]$ splits naturally into two phases. During the first phase, no information is revealed to the traders. As a result, they choose to unwind their inventories. This benefits the broker, who can match buyers and sellers and collect transaction fees $\eta$ without having to liquidate any of his own inventory. However, once the traders have unwound their inventories, the broker cannot generate profit unless he begins to reveal his information. Assuming, for example, a positive signal, the traders will then buy from the broker at cost $\eta$, while the broker can simultaneously buy on the market at a cost of $\eta^B < \eta$, thereby still making a profit.\\

Figure~\ref{fig:rep_init_0} illustrates the behavior of a representative trader starting with zero initial inventory. As expected, the trader remains inactive until information is revealed by the broker. Once this occurs, and given that the drift is positive, the trader begins to buy. He initially trades at a high rate to quickly build a sufficiently positive inventory, and then gradually reduces his trading intensity, in line with classical optimal execution models. As the time horizon approaches its end, the trader progressively liquidates part of his position to mitigate the terminal inventory penalty.\\

Conversely, during the first phase, a trader starting with a large positive inventory initially begins to sell in order to reduce his position towards zero, since the expected drift is zero at this stage. This behavior is illustrated in Figure~\ref{fig:rep_init_200}. Once the information is revealed, the trader naturally rebuilds his inventory. Such behavior is highly beneficial to the broker.

\begin{figure}[!h]
    \centering
    \includegraphics[width=0.64\textwidth]{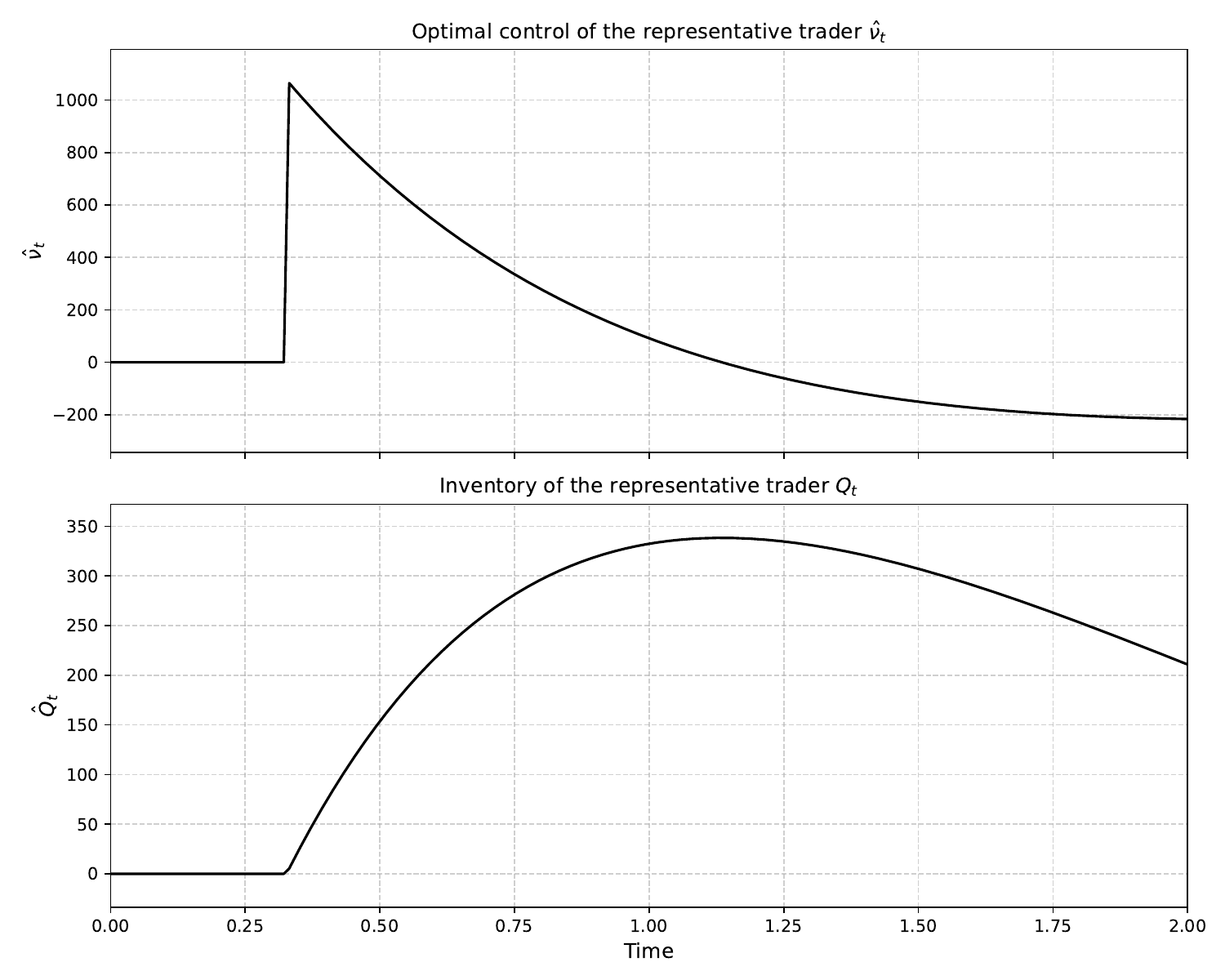}
    \caption{Behavior of a representative trader with initial inventory $Q_0=0$.}
    \label{fig:rep_init_0}
\end{figure}

\begin{figure}[!h]
    \centering
    \includegraphics[width=0.64\textwidth]{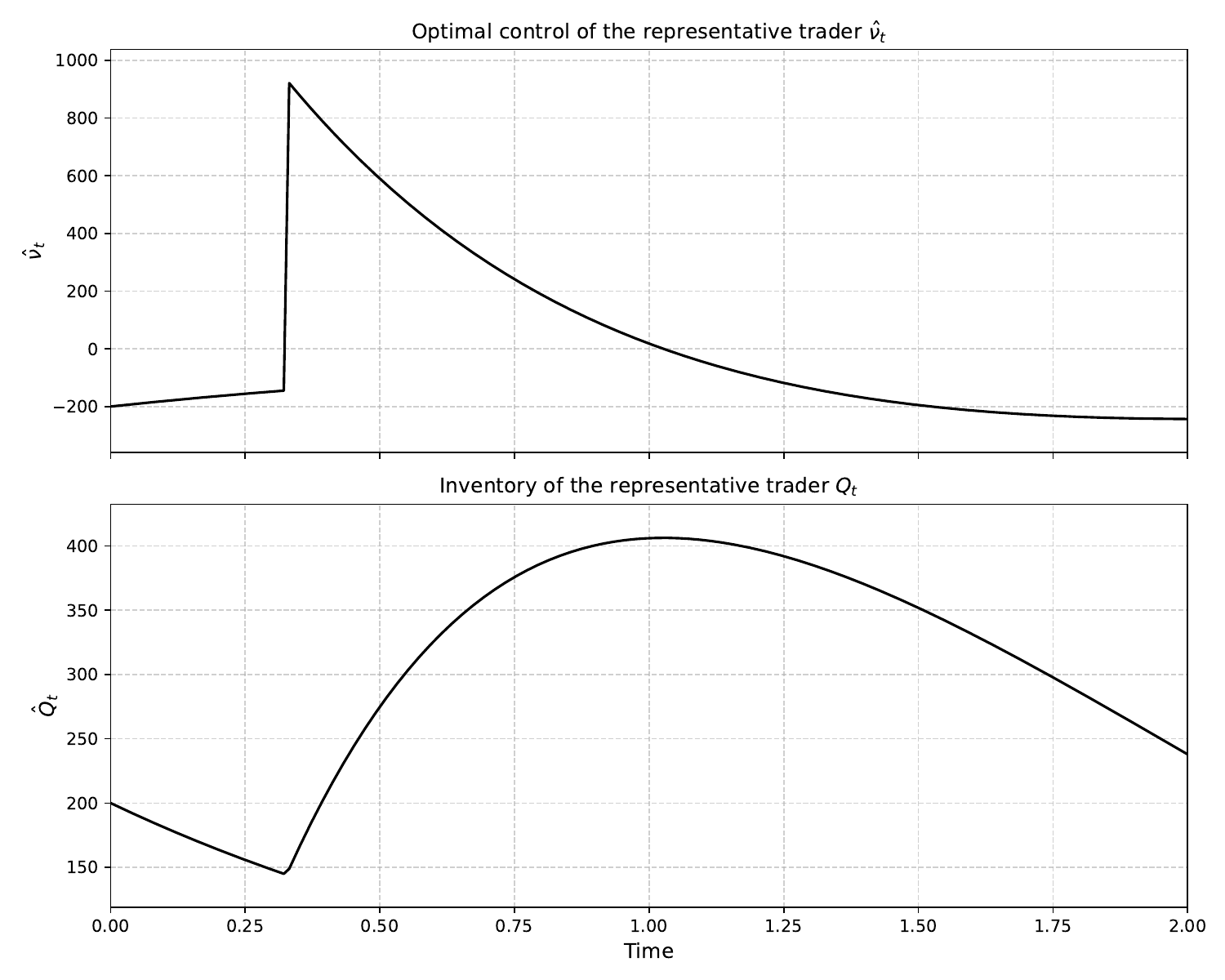}
    \caption{Behavior of a representative trader with initial inventory $Q_0=200$.}
    \label{fig:rep_init_200}
\end{figure}

Overall, Figure~\ref{fig:small_distrib} shows the distribution of small traders’ inventories at different time points, based on 10,000 simulations. We start with a Gaussian distribution for the initial inventory with zero mean and a standard deviation of 0.5. Once the information is revealed, the distribution shifts to the right, towards positive inventories, to take advantage of the drift. As the terminal time $T$ approaches, the distribution slightly moves back towards zero to account for the terminal penalty. We also observe that the standard deviation tends to decrease over time, since traders are homogeneous and target the same inventory level at any given time; the remaining differences can be explained by the variation in their initial inventory values.

\begin{figure}[!htbp]
    \centering
    \includegraphics[width=0.71\textwidth]{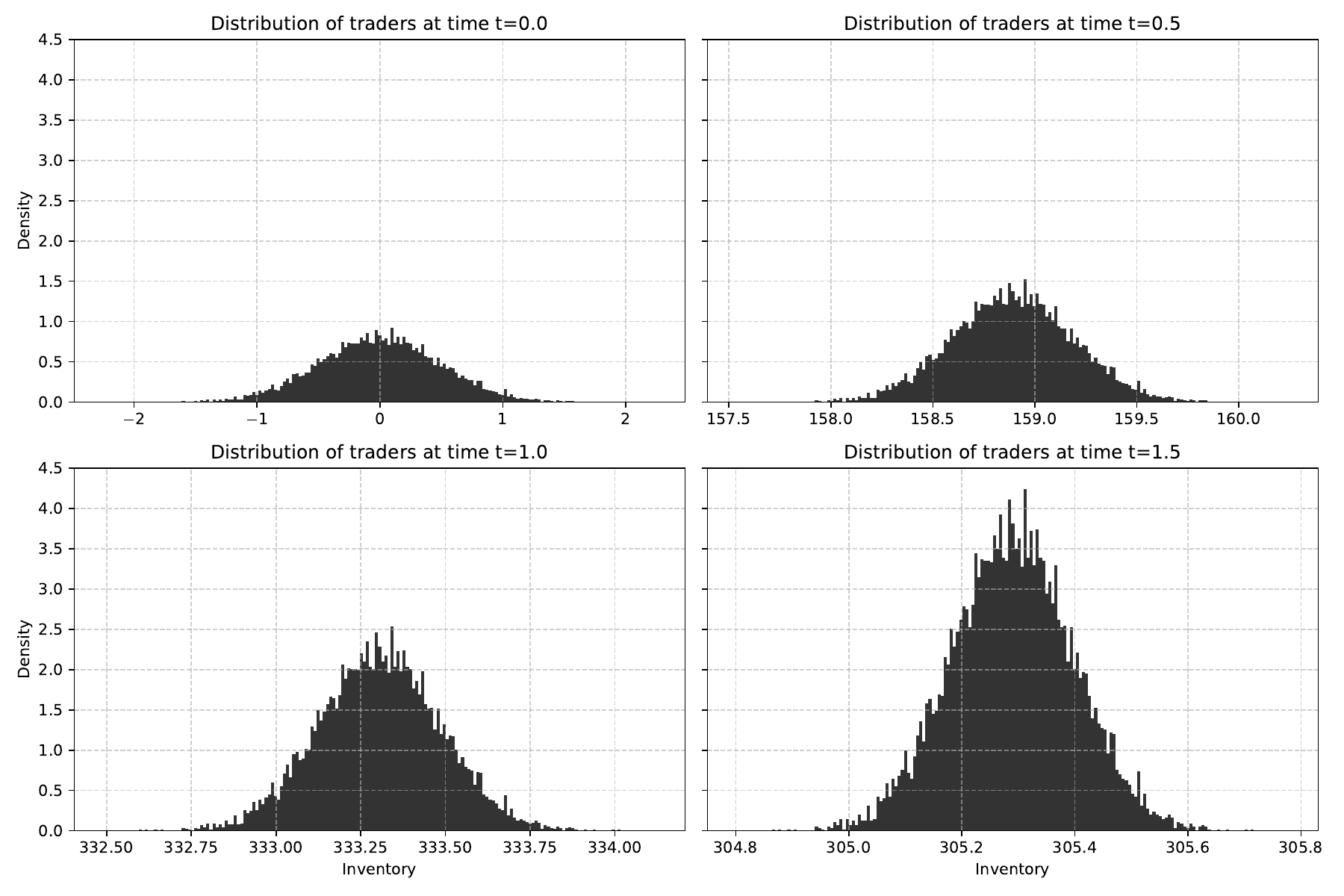}
    \caption{Distribution of the traders' inventories.}
    \label{fig:small_distrib}
\end{figure}

Finally, Figure~\ref{fig:broker} displays the behavior of the broker who starts with zero inventory. Naturally, he remains inactive until the time he reveals his information. Afterward, both his control and inventory become positive, indicating that he buys at a relatively higher rate than the traders in order to maintain a positive inventory and benefit from the drift. The terminal penalty also forces him to reduce his inventory as the horizon approaches; note that his inventory begins to decrease around $t=0.75$ while his trading rate is still slightly positive -- this occurs because this rate is no longer sufficient to offset the purchases made by the small traders.

\begin{figure}[!htbp]
    \centering
    \includegraphics[width=0.71\textwidth]{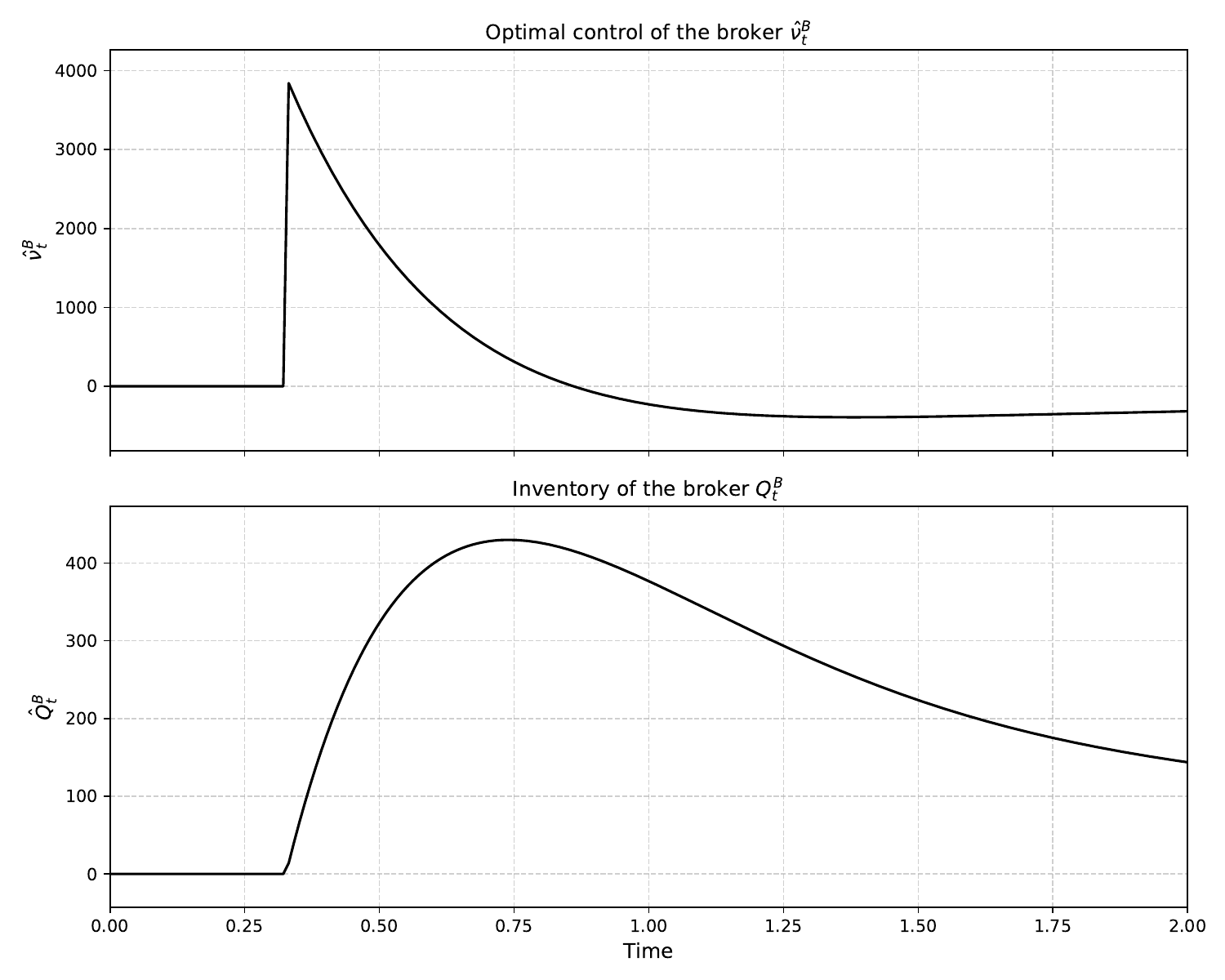}
    \caption{Behavior of the broker.}
    \label{fig:broker}
\end{figure}

\section{Conclusion}

We have studied a Stackelberg game in which an informed broker interacts with a continuum of uninformed traders, aiming to externalize inventory while concealing private information about the price drift. The model combines an open-loop control for the broker and a closed-loop mean field game for the traders, yielding explicit characterizations of optimal strategies. Our results show that the broker should initially behave as if uninformed and strategically delay the revelation of information. This delayed disclosure allows the broker to benefit from his informational advantage without inducing adverse reactions from the market too early.\\

The numerical analysis illustrates how traders adjust their inventories in response to the broker’s actions, and how the broker's strategy induces a gradual alignment of the traders' positions. These simulations confirm the theoretical insights and reveal the collective dynamics of inventory adjustment and risk management under asymmetric information.

\bibliographystyle{plainnat}
\bibliography{mfgbroker.bib}

\end{document}